\newtheorem{theorem}{Theorem} 
\newtheorem{lemma}[theorem]{Lemma}
\newtheorem{corollary}[theorem]{Corollary}
\newtheorem{observation}[theorem]{Observation}
\title{Computing the Center Region and Its Variants\footnote{
This research was supported by the MSIT(Ministry of Science and ICT), Korea, under the SW Starlab support program(IITP-2017-0-00905) supervised by the IITP(Institute for Information \& communications Technology Promotion.)}}
\author{Eunjin Oh\thanks{Pohang University of Science and Technology,
		Korea. Email: {\tt{\{jin9082, heekap\}@postech.ac.kr}}} \and
	 Hee-Kap Ahn\footnotemark[2]~\thanks{Corresponding author.} }
\newcommand{\intersection} {\textsc{Intersection}}
\newcommand{\ch}{\textsf{CH}}
\def\polylog{\operatorname{polylog}}
\begin{document}
\date{}
\maketitle
\begin{abstract}
  We present an $O(n^2\log^4 n)$-time algorithm for computing the
  center region of a set of $n$ points in the three-dimensional
  Euclidean space. This improves the previously best known algorithm
  by Agarwal, Sharir and Welzl, which takes $O(n^{2+\epsilon})$ time
  for any $\epsilon > 0$.  It is known that the combinatorial
  complexity of the center region is $\Omega(n^2)$ in the worst case,
  thus our algorithm is almost tight. We also consider the problem of
  computing a colored version of the center region in the
  two-dimensional Euclidean space and present an $O(n\log^4 n)$-time
  algorithm.
\end{abstract}

\section{Introduction}
Let $S$ be a set of $n$ points in $\mathbb{R}^d$.  The \emph{(Tukey)
depth} of a point $x$ in $\mathbb{R}^d$ with respect to $S$ is
defined to be the minimum number of points in $S$ contained in a
closed halfspace containing $x$.  A point in $\mathbb{R}^d$ of largest
depth is called a \emph{Tukey median}.  
The Helly's theorem implies that the depth of a Tukey median is at
least $\lceil n/(d+1) \rceil$. In other words, there always exists a
point in $\mathbb{R}^d$ of depth at least $\lceil n/(d+1) \rceil$.
Such a point is called a \emph{centerpoint} of $S$.
A Tukey median is a centerpoint, but not every centerpoint is a Tukey median.
We call the set of all centerpoints in $\mathbb{R}^d$ 
the \emph{center region} of $S$.
Figure~\ref{fig:intro}(a) shows 9 points in the plane and their Tukey medians and
center region.

\begin{figure}[t]
  \begin{center}
    \includegraphics[width=0.7\textwidth]{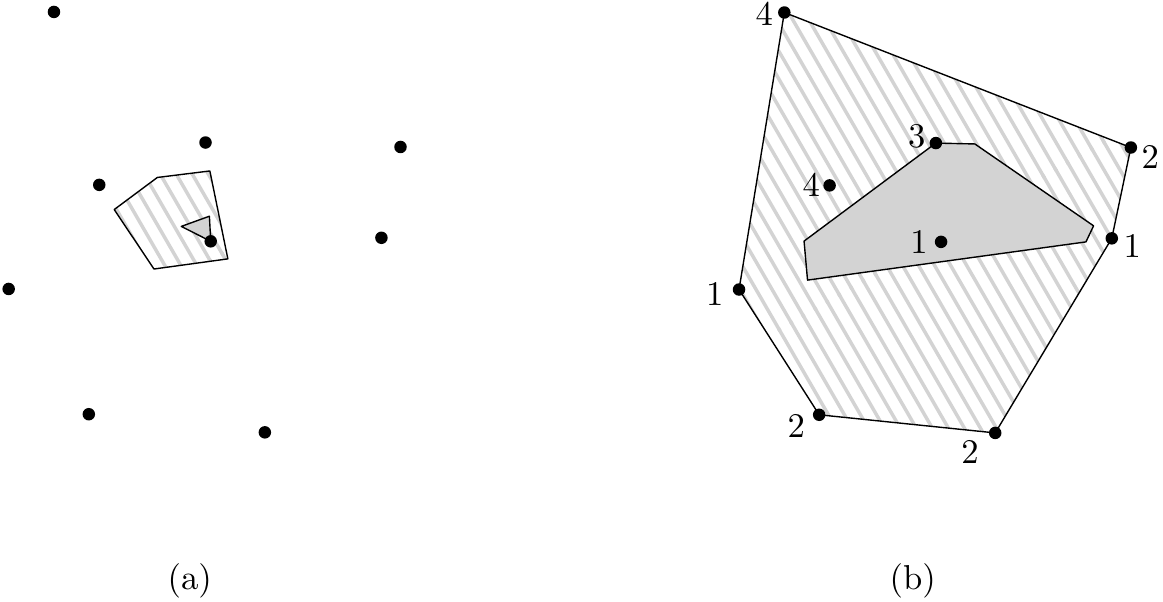}
    \caption{\small
      (a) 9 points in the plane ($n=9$ and $d=2$), and their Tukey medians
      (any point in the inner gray triangle) and center region (the outer pentagon).
      (b) 9 points each assinged one of 4 colors $\{1,2,3,4\}$ in the plane ($k=4$ and $d=2$),
      and their colorful Tukey medians (any point in the inner gray hexagon) and
      colorful center region (the outer hexagon). 
			\label{fig:intro}}
 \end{center}
\end{figure}

The Tukey median and centerpoint  are commonly used measures of the properties of
a data set in statistics and computational geometry. They are considered as
generalizations of the standard median in the one-dimensional space to a higher
dimensional space.
For instance, a good location for a hub with respect to a set of facilities
given in the plane or in a higher dimensional space would be the \textit{center} of
the facilities with respect to the distribution of the them in the underlying space.
Obviously, a Tukey median or a centerpoint of the facilities is a good candidate for
a hub location.
Like the standard median, the Tukey median and centerpoint are robust against outliers
and describe the center of data with respect to the data distribution.
Moreover, they are invariant under affine transformations~\cite{tukey-median}.

In this paper, we first consider the problem of computing the center region of points
in $\mathbb{R}^3$. By using a duality transform and finer triangulations in
the arrangement of planes, we present an algorithm which improves the best known
one for the problem.

Then we consider a variation of the center region where each point is given
one color among $k \in \mathbb{N}$ colors. Suppose that there are $k$
different types of facilities and we have $n$ facilities
of these types with $k\leq n$.  Then the standard definitions of
the center of points such as centerpoints, center regions, and
Tukey medians do not give a good representative of the $n$ facilities
of these types.  
Another motivation of colored points comes from discrete imprecise
data.  Suppose that we have imprecise points. We do not know the
position of an imprecise point exactly, but each imprecise point has
a candidate set of points where it lies. 
For such imprecise points, we can consider the points in the same
candidate set to be of the same color while any two points from two different candidate 
sets have different colors.

Then the \emph{colorful (Tukey) depth} of $x$ in $\mathbb{R}^d$ is naturally defined to be 
the minimum number of different colors of points contained in a closed 
halfspace containing $x$. A \emph{colorful Tukey median} is a point in $\mathbb{R}^d$
of largest colorful depth.
The colorful depth and the colorful Tukey median have properties
similar to the standard depth and Tukey median, respectively.  We prove that the
colorful depth of a colorful Tukey median is at least $\lceil k/(d+1)
\rceil$.  Then the colorful centerpoint and colorful center region are
defined naturally.  We call a point in $\mathbb{R}^d$ with colorful
depth at least $\lceil k/(d+1) \rceil$ a \emph{colorful centerpoint}.
The set of all colorful centerpoints is called the
\emph{colorful center region}.
Figure~\ref{fig:intro}(b) shows 9 points each assigned one of 4 colors
in the plane and their colorful Tukey medians and
colorful center region.

\paragraph{Previous work.}
In $\mathbb{R}^2$, the first nontrivial algorithm for computing a
Tukey median is given by Matous\v{e}k~\cite{center-region-2d}.  Their
algorithm computes the set of all points of Tukey depth at least a
given value as well as a Tukey median.  It takes $O(n\log^5
n)$ time for computing a Tukey median and $O(n\log^4 n)$ time for
computing the region of Tukey depth at least a given value.
	
Although it is the best known algorithm for computing the region of
Tukey depth at least a given value, a Tukey median can be computed
faster.  Langerman and Steiger~\cite{center-point-2d-langerman}
present an algorithm to compute a Tukey median of points in
$\mathbb{R}^2$ in $O(n\log^3 n)$ deterministic time.  Later,
Chan~\cite{tukey-median} gives an algorithm to compute a Tukey median
of points in $\mathbb{R}^d$ in $O(n \log n + n^{d-1})$ expected time.
	
A centerpoint of points in $\mathbb{R}^2$ can be computed in linear
time~\cite{center-point-2d}.  On the other hand, it is not known
whether a centerpoint of points in $\mathbb{R}^d$ for $d > 2$ can be
computed faster than a Tukey median.
	
The center region of points in $\mathbb{R}^2$ can be computed using
the algorithm by Matous\v{e}k~\cite{center-region-2d}.  For
$\mathbb{R}^3$, Agarwal, Sharir and Welzl
present an $O(n^{2+\epsilon})$-time algorithm for any $\epsilon>0$~\cite{center-region-3d}.
However, the constant hidden in the big-O notation is proportional to
$\epsilon$.  Moreover, as $\epsilon$ approaches $0$, the constant goes to
infinity.  It is not known whether the center region of points in
$\mathbb{R}^d$ for $d>3$ can be computed faster than an $O(n^d)$-time trivial algorithm 
which uses the arrangement of the dual hyperplanes of the points.
Moreover, even the tight combinatorial complexity of the center region is not known for $d>3$.
	
The center of colored points and
its variants have been studied in the literature~\cite{color-rectangle, color-disk, color-square}.
However, the centers of colored points defined in most previous
results are sensitive
to distances, which are not adequate to handle imprecise data.
Therefore, a more robust definition of a center of colored points is
required.  We believe that the colorful center region and colorful
Tukey median can be alternative definitions of the center of colored
points.
	
\paragraph{Our result.}
We present an algorithm to compute
the center region of $n$ points in $\mathbb{R}^3$ in
$O(n^2 \log^4 n)$ time. This improves the previously best known
algorithm~\cite{center-region-3d} and answers to the question posed
in the same paper. Moreover, it is almost tight
as the combinatorial complexity of the center region in $\mathbb{R}^3$ is $\Theta(n^2)$
in the worst case~\cite{center-region-3d}.

We also present an algorithm to
compute the colorful center region of $n$ points in $\mathbb{R}^2$ in 
$O(n\log^4 n)$ time.  We obtain this algorithm by
modifying the algorithm for computing the standard center region of
points in $\mathbb{R}^2$ in~\cite{center-point-2d}.

We would like to mention that a colorful Tukey median can
be computed by modifying the algorithms for the standard version of
a Tukey median without increasing the running times, which take $O(n\log n +
n^{d-1})$ expected time in $\mathbb{R}^d$~\cite{tukey-median} and
$O(n\log ^3 n)$ deterministic time in
$\mathbb{R}^2$~\cite{center-point-2d-langerman}.

\section{Preliminaries}
In this paper, we use a duality transform that maps a set of input
points in $\mathbb{R}^3$ to a set of planes.  Then we transform each problem into
an equivalent problem in the dual space
and solve the problem using the arrangement of the
planes.  The Tukey depth is closely related to the level of an
arrangement.  This is a standard way to deal with the Tukey
depth~\cite{center-region-3d,tukey-median,center-point-2d-langerman,center-region-2d}.  Thus, in this
section, we introduce a duality transform and some definitions for an
arrangement.

\paragraph{Duality transform.}
A standard duality transform maps a point $x \in \mathbb{R}^d$ to a 
hyperplane $x^*=\{z \in \mathbb{R}^d \mid \langle x, z \rangle = 1\}$ and
vice versa, where $\langle x, z \rangle$ is the scalar product of $x$
and $z$ for any two points $x, z \in \mathbb{R}^d$.  Then $x$ lies
below a hyperplane $s$ if and only if the point $s^*$ lies below the
hyperplane $x^*$.

\paragraph{Level of an arrangement.}
Let $H$ be a set of hyperplanes in $\mathbb{R}^d$.  A point $x \in
\mathbb{R}^d$ has \emph{level $i$} if exactly $i$ hyperplanes lie
below $x$ (or pass through $x$.) Note that any point in the same cell
in the arrangement of $H$ has the same level.  For an integer $\ell >
0$, the \emph{level} $\ell$ in the arrangement of $H$ is defined as the set of
all points of level at most $\ell$.  We define the level of an
arrangement of a set of $x$-monotone polygonal curves in $\mathbb{R}^2$ in a similar way.

\section{Computing the Center Region in
  \texorpdfstring{$\mathbb{R}^3$}{R3}}
\label{sec:first-problem}
Let $S$ be a set of $n$ points in $\mathbb{R}^3$.  In this section, we
present an $O(n^2\log^4 n)$-time algorithm for computing the set of
points of Tukey depth at least $\ell$ with respect to $S$ for a given
value $\ell$. We achieve our algorithm by modifying the previously best known algorithm 
for this problem given by Agarwal, Sharir and Welzl~\cite{center-region-3d}. Thus we first provide
a sketch of their algorithm.

\subsection{The Algorithm by Agarwal, Sharir and Welzl}
Using the standard duality transform, they map the
set $S$ of points to a set $S^*$ of planes in $\mathbb{R}^3$.
Due to the properties of the duality transform, the problem reduces to computing the convex hull of $\Lambda_\ell$,
where $\Lambda_\ell$ is the level $\ell$ in the arrangement of the planes in $S^*$.
The complexity of $\Lambda_\ell$ is $\Theta(n^2)$. Moreover,
the complexity of $\Lambda_\ell\cap h$ is $\Theta(n^2)$ for a plane $h$ in $S^*$ in the worst case.
Thus, instead of handling $\Lambda_\ell$ directly, they compute a convex polygon $K_h$ for each plane
$h \in S^*$ with the property that $\ch(\Lambda_\ell\cap h) \subseteq
K_h \subseteq \ch(\Lambda_\ell)\cap h$. Notice that $K_h$ is contained in $h$ for any $h\in S^*$. 
By definition, the convex hull
of $K_h$'s over all planes $h$ in $S^*$ is the convex hull of
$\Lambda_\ell$.  Therefore, once we have such a convex polygon $K_h$ for
every plane $h$, we can compute the set of points of Tukey depth at
least $\ell$ with respect to $S^*$.

To this end, they sort the planes in $S^*$ in the following order.
Let $h^+$ be the closed halfspace bounded from below by a plane $h$,
and $h^-$ be the closed halfspace bounded from above by $h$.
We use $\langle h_1,\ldots,h_n\rangle$ to denote the sequence of the
planes in $S^*$ sorted in the order satisfying the
following property: the level of a point $x \in
h_i$ in the arrangement of $S^*$ is the number of halfplanes
containing $x$ among all halfplanes $h_j^+ \cap h_i$ for all $j \leq i$
and all halfplanes $h_{j'}^- \cap h_i$ for all $j' > i$.
Agarwal et al. showed that this sequence can be computed in $O(n\log n)$ time.

The algorithm considers each plane in $S^*$ one by one in this
order and computes a convex polygon $K_h$ for each plane $h$, which will be defined below.
For simplicity, we let $K_j = K_{h_j}$ for any $j\in\{1,\ldots,n\}$.
For $h_1$, the convex hull of the level $\ell$ in the arrangement of
all lines in $\{h_1 \cap h_i \mid 1 < i \leq n\}$ satisfies the property
for $K_1$.  So, let $K_1$ be the convex hull of the level $\ell$.  The algorithm computes
$K_1$ in $O(n\log^4 n)$ time using the algorithm in~\cite{center-region-2d}.

Now, suppose that we have handled all planes $h_1,\ldots,h_{j-1}$ and
we have $K_1,\ldots,K_{j-1}$ for some $j$.  Let $\Gamma_j = \{K_i \cap h_j \mid 1\leq
i < j\}$. Note that each element in $\Gamma_j$ is a line segment, a ray, or a line. 
Then $K_j$ is defined to be $\ch(\ch(\Gamma_j) \cup (\Lambda_\ell \cap h_j))$.
The set $\ch(\Gamma_j) \cup (\Lambda_\ell \cap h_j)$ 
consists of at most two connected components.  Using 
this property, they give a procedure to compute the intersection
of $K_j$ with a given line segment without knowing $K_j$ explicitly.  More precisely,
they give the following lemma.
Using this procedure and a cutting in $\mathbb{R}^3$, they compute $K_j$ in $O(n^{1+\epsilon})$ time.
Let $H_j = \{h_i \cap h_j \mid 1\leq i < j\}$.

\begin{lemma}[Lemma 2.11. \cite{center-region-3d}]
  \label{lem:procedure}
  Given a triangle $\triangle \subset h_ j$, the set $Z_j$ of edges of
  $K_j$ that intersect the boundary of $\triangle$,
  a segment $e \subset \triangle$, the subset $G \subset H_j$ of the
  $m$ lines that intersect $\triangle$, and an integer $u < m$ such that
  the level $u$ of the arrangement of $G$ coincides with
  $\Lambda_\ell$ within $\triangle$, we can compute the edge of $K_j$ intersecting
  $e$ in $O(m \log^3 (m+|\ch(\Gamma_j)|))$ time.\footnote{
    The running time of the procedure in~\cite{center-region-3d} appears as
    $O(m \polylog (m+|\ch(\Gamma_j)|))$ time. We
    provide a tighter bound.}
\end{lemma}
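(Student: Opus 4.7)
The plan is to reduce the computation to $O(\log(m+|\ch(\Gamma_j)|))$ extreme-point queries on $K_j$ after a single preprocessing phase of cost $O(m\log^3(m+|\ch(\Gamma_j)|))$. Since both $K_j$ and $\triangle$ are convex, the arc $\partial K_j\cap\triangle$ is a single convex chain whose two endpoints on $\partial\triangle$ are determined by the edges in $Z_j$. The unique edge of $K_j$ crossing $e$ has an outward normal lying in a specific angular range along this chain, so a standard angular binary search that at each step asks for the extreme vertex of $K_j$ in a candidate direction and compares it against the supporting line of $e$ will pinpoint this edge after $O(\log(m+|\ch(\Gamma_j)|))$ iterations.

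For the extreme-point oracle I use the identity $K_j=\ch(\ch(\Gamma_j)\cup(\Lambda_\ell\cap h_j))$: the extreme vertex of $K_j$ in a direction $d$ is simply the farther of the extreme vertex of $\ch(\Gamma_j)$ and the extreme vertex of $\Lambda_\ell\cap h_j$ in direction $d$. The first is located in $O(\log|\ch(\Gamma_j)|)$ time by binary search on the boundary of $\ch(\Gamma_j)$, which is maintained explicitly by the outer algorithm. For the second I use that inside $\triangle$ the curve $\Lambda_\ell\cap h_j$ coincides with the level-$u$ curve of the arrangement of $G$, and I preprocess the pair $(G,u)$ into a tangent-query data structure for that level curve capable of answering each query in $O(\log^3 m)$ time.

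The main obstacle is building this tangent-query structure within the $O(m\log^3 m)$ budget; applying Matou\v{s}ek's level-computation algorithm of~\cite{center-region-2d} as a black box would cost $O(m\log^4 m)$ and waste a logarithmic factor. I plan to save the factor by replacing the outermost binary search in Matou\v{s}ek's cutting-based recursion with a parametric search whose decision procedure is precisely the comparison used in the outer tangent query; this is legitimate because the tangent comparison is monotone along any pencil of directions. An alternative route is to fuse Matou\v{s}ek's outermost binary search with the outer binary search over $e$ so that only three of the four logarithmic factors survive. Verifying that the parallel comparisons of the parametric search can be executed in $O(m)$ total time at each level of Matou\v{s}ek's recursion, without disturbing the merging steps, is the delicate technical point.

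Granting this, each of the $O(\log(m+|\ch(\Gamma_j)|))$ outer iterations costs $O(\log^3 m+\log|\ch(\Gamma_j)|)$, and the overall running time is dominated by the $O(m\log^3(m+|\ch(\Gamma_j)|))$ preprocessing, as claimed.
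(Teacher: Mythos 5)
There is a genuine gap, and it sits exactly where you flag it yourself. Your whole reduction rests on an extreme-point/tangent oracle for $\Lambda_\ell\cap h_j$, i.e.\ a data structure, built in $O(m\log^3 m)$ time from $(G,u)$, that answers tangent queries on the convex hull of the level $u$ of an arrangement of $m$ lines in $O(\log^3 m)$ time per query. No such structure is known, and neither of your two proposed routes to it (splicing a parametric search into the outermost binary search of Matou\v{s}ek's recursion, or fusing that binary search with your outer angular search) is carried out; you explicitly leave the decisive verification open. Note also that such a structure would let one output the entire hull of the level in roughly $O(m\log^3 m)$ time, improving Matou\v{s}ek's $O(n\log^4 n)$ bound, which this paper itself quotes as the best known for that problem --- so this ingredient cannot simply be assumed; it would be a new result needing its own proof. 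The intended argument (this lemma is Lemma 2.11 of Agarwal, Sharir and Welzl, which the paper invokes rather than reproves, with the footnote only sharpening the polylog accounting) has a different shape: there is no preprocessing/query split at all. One designs a cheap decision procedure --- given a ray or line, decide in $O\bigl((m+|\ch(\Gamma_j)|)\log(m+|\ch(\Gamma_j)|)\bigr)$ time whether it avoids the region $\ch(\Gamma_j)\cup(\Lambda_\ell\cap h_j)$, by sorting its intersections with $G$ and with $\ch(\Gamma_j)$ and walking along it --- and then wraps this decision procedure in nested Cole-accelerated parametric searches (tangent through a point, then intersection with $e$), which is what yields a single call in $O(m\log^3(m+|\ch(\Gamma_j)|))$ time; this is the same scheme the present paper spells out in its colored two-dimensional analogues (Lemmas~\ref{lem:first-level} and~\ref{lem:intersection}).

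A second, more structural problem: your oracle ``extreme vertex of $K_j$ in direction $d$ is the farther of the extreme vertices of $\ch(\Gamma_j)$ and of $\Lambda_\ell\cap h_j$'' needs global access to $\Lambda_\ell\cap h_j$, but the hypotheses of the lemma only guarantee that the level $u$ of $G$ coincides with $\Lambda_\ell$ \emph{within} $\triangle$; outside $\triangle$ the pair $(G,u)$ says nothing about $\Lambda_\ell$, so extreme-point queries in arbitrary directions simply cannot be answered from the given data. The correct procedure avoids this by working only with tests localized to $\triangle$ (rays and segments inside $\triangle$), anchored by the edges in $Z_j$ and by the fact that $\ch(\Gamma_j)\cup(\Lambda_\ell\cap h_j)$ has at most two connected components. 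To repair your write-up you would either have to restrict all queried directions so that the relevant extreme vertex provably lies in $\triangle$ and justify that the local level description suffices for the ``farther of the two'' comparison, or abandon the oracle formulation and argue via a decision procedure plus parametric search as above.
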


Here $|\ch(\Gamma_j)|$ is the complexity (that is, the number of edges) of
$\ch(\Gamma_j)$.
Denote this procedure by $\intersection(e, Z_j, \triangle, G, u)$.
By applying this procedure with input satisfying the assumption in
the lemma, we can obtain the edge of $K_j$ intersecting a segment $e\subset \triangle$.

The algorithm computes all edges of $K_j$ using this procedure.
It recursively subdivides the plane
$h_j$ into a number of triangles using $1/r$-nets.
Assume that the followings are given: a triangle $\triangle$, a set $G$ of lines in $H_j$
intersecting $\triangle$, a set $Z_j$ of edges of $K_j$ intersecting the boundary of $\triangle$, and an integer $u$
such that the level $u$ of the arrangement of $G$ coincides with
$\Lambda_\ell$ within $\triangle$.
They are initially set to a (degenerate) triangle $\triangle = h_j$, a
set $G=H_j$, an empty set $Z_j$, and an integer $u = \ell$.
Using this information, the algorithm computes $\triangle \cap K_j$ recursively in 
$O(m^{1+\epsilon})$ time, where $m$ is the size of $H_j$.

Consider the set system $(G, \{\{h \in G \mid h \cap \tau \neq \phi\} \mid
\tau \textnormal{ is a triangle}\})$.  Let $r \in \mathbb{R}$ be a
sufficiently large number.  The algorithm computes a $1/r$-net of $G$ of
size $O(r\log r)$ and triangulates every cell in the arrangement of the $1/r$-net
restricted to $\triangle$.
For each side $e$ of the triangles, the algorithm applies $\intersection(e, Z_j,
\triangle, G, u)$.  Then partial information of $K_j$ is obtained.
Note that $\triangle'$ does not intersect the boundary of $K_j$ for a
triangle $\triangle'$ none of whose edge intersects the boundary of
$K_j$.  Therefore, it is sufficient to consider triangles
some of whose edges intersect the boundary of $K_j$ only.  There are
$O(r\log r\cdot\alpha(r\log r))$ such triangles, where $\alpha(\cdot)$ is the inverse Ackermann function.

Moreover, for each such triangle $\triangle'$, it is sufficient to
consider the lines in $G$ intersecting $\triangle'$. Let $G'$ be the
set of all lines in $G$ intersecting $\triangle'$.  A line lying above
$\triangle'$ does not affect the level of a point in $\triangle'$, so
we do not need to consider it. Thus, the level $u'$ of the arrangement of $G'$ coincides with $\Lambda_\ell$
within $\triangle'$, where $u'$ is $u$ minus the number of lines in $G$ lying
below $\triangle'$.  The following lemma summarizes this argument.
\begin{lemma}
  \label{lem:triangle}
  Consider a triangle $\triangle$ and a set $G$ of lines such that the
  level $u$ of the arrangement of $G$ coincides with $\Lambda_\ell$
  within $\triangle$.  For any triangle $\triangle' \subset \triangle$, 
  the level $u'$ of the arrangement of $G'$ coincides with
  $\Lambda_\ell$ within $\triangle$, where $u'$ is $u$ minus the
  number of lines in $G$ lying below $\triangle'$ and $G'$ is the set
  of lines in $G$ intersecting $\triangle'$.
\end{lemma}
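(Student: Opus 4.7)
The plan is to reduce the claimed coincidence of levels within $\triangle'$ to a simple additive shift: I would show that for every point $x \in \triangle'$, the level of $x$ in the arrangement of $G'$ equals the level of $x$ in the arrangement of $G$ minus a constant, and that this constant is exactly $u-u'$, namely the number of lines of $G$ lying below $\triangle'$. Once that shift is in hand, the desired equality of level sets follows by restricting the hypothesis on $(\triangle,G,u)$ to $\triangle' \subseteq \triangle$.

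First I would partition $G$ into $G'$ (the lines that meet $\triangle'$) and $G \setminus G'$ (the lines that miss $\triangle'$ entirely). The key geometric observation is that under the standard general-position assumption, a line disjoint from $\triangle'$ must lie completely on one side of $\triangle'$ with respect to the direction used to define ``below'': either strictly below every point of $\triangle'$ or strictly above every point of $\triangle'$. In particular, any line of $G$ that ``lies below $\triangle'$'' must belong to $G \setminus G'$, since lines in $G'$ cross $\triangle'$ and cannot lie entirely on one side of it.

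Next I would do the level bookkeeping. Fix $x \in \triangle'$. Each line of $G'$ contributes the same indicator (below $x$ or not) to the level of $x$ in $G$ and in $G'$, so those contributions cancel in the difference. A line of $G \setminus G'$ lying below $\triangle'$ is below $x$, adding $1$ to $\mathrm{level}_G(x)$ but $0$ to $\mathrm{level}_{G'}(x)$; a line of $G \setminus G'$ lying above $\triangle'$ contributes $0$ to both. Summing, $\mathrm{level}_G(x) - \mathrm{level}_{G'}(x)$ equals the number of lines of $G$ lying below $\triangle'$, which is precisely $u - u'$. Therefore $\mathrm{level}_{G'}(x) \le u'$ if and only if $\mathrm{level}_G(x) \le u$, and since the hypothesis gives $\{x \in \triangle : \mathrm{level}_G(x) \le u\} = \Lambda_\ell \cap \triangle$, restricting to $\triangle'$ yields $\{x \in \triangle' : \mathrm{level}_{G'}(x) \le u'\} = \Lambda_\ell \cap \triangle'$, which is the claim (reading the concluding ``within $\triangle$'' as ``within $\triangle'$'').

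The only step that requires any care is the above/below dichotomy for lines in $G \setminus G'$, and this is immediate under general position and can otherwise be enforced by a symbolic tie-breaking perturbation of the input. The rest is an elementary level-counting argument, so I do not anticipate any serious obstacle.
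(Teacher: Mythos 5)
Your proof is correct and is essentially the same argument the paper gives (informally, in the text immediately preceding the lemma): lines of $G$ that miss $\triangle'$ lie entirely above or below it, those above contribute nothing to the level of any point of $\triangle'$, and those below shift every level in $\triangle'$ by exactly their count, so subtracting that count from $u$ preserves the coincidence with $\Lambda_\ell$ when restricting the hypothesis from $\triangle$ to $\triangle'$. Your reading of the concluding ``within $\triangle$'' as ``within $\triangle'$'' is indeed the intended correction of a typo in the statement.
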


By recursively applying this procedure, the algorithm obtains $K_j \cap
\triangle$.  Therefore, it obtains $K_j$ because $\triangle$ is initially set to $h_j$.

For the analysis of the time complexity, let $T(m, \mu)$
be the running time of the subproblem within $\triangle$, where $m$ is
the number of lines interesting $\triangle$ in $G$ and $\mu$ is the
number of vertices of $K_j$ lying inside $\triangle$.  Then 
the following recurrence inequality is obtained.
$$
T(m,\mu) \leq \sum_{\triangle'} T(\frac{m}{r}, \mu') + O(m\log^3 (m+|\ch(\Gamma_j)|)
+ \mu)
$$
for $m \geq Ar\log r$, where $A$ is some constant independent of $r$.
This inequality holds because the number of lines in $G$ intersecting
a triangle $\triangle'\subset\triangle$ obtained from the arrangement of the $1/r$-net 
is $O(m/r)$ by the property of $1/r$-nets.

It holds that $T(m,\mu) = O(m^{1+\epsilon}\log^3 (m+|\ch(\Gamma_j)|))$ for
any constant $\epsilon > 0$.  Initially, $m$ and $|\ch(\Gamma_j)|$ are $O(n)$.
Thus the overall running time for each plane in $S^*$ is $O(n^{1+\epsilon})$. Therefore,
$K_j$ can be computed in $O(n^{2+\epsilon})$ time in total for all $1\leq j\leq n$,
and the convex hull of $\Lambda_\ell$ can be computed in the same time.

\begin{theorem}[Theorem 2.10. \cite{center-region-3d}]
  Given a set $S$ of $n$ points in $\mathbb{R}^3$ and an integer $\ell$, the set of points of Tukey depth at least $\ell$ with respect to $S$ can be computed
  in $O(n^{2+\epsilon})$ time for any constant $\epsilon>0$.
\end{theorem}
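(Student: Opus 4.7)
The plan is to apply the standard duality to turn the $n$ input points into a set $S^*$ of $n$ planes so that the set of points of Tukey depth at least $\ell$ becomes $\ch(\Lambda_\ell)$, the convex hull of the level $\ell$ in the arrangement of $S^*$. Since $\ch(\Lambda_\ell)$ equals the convex hull of its intersections with the planes of $S^*$, it suffices to compute, for each plane $h\in S^*$, a planar convex polygon $K_h$ with $\ch(\Lambda_\ell\cap h)\subseteq K_h\subseteq \ch(\Lambda_\ell)\cap h$, and then take the 3D convex hull of the resulting $n$ polygons.

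First I would sort $S^*$ in the level-respecting order $\langle h_1,\dots,h_n\rangle$ in $O(n\log n)$ time, compute $K_1$ directly as the convex hull of the level $\ell$ in the planar arrangement of the lines $\{h_1\cap h_i\}_{i>1}$ using Matou\v{s}ek's $O(n\log^4 n)$ algorithm, and then compute $K_j$ inductively for $j\ge 2$, assuming $K_1,\dots,K_{j-1}$ are already available.

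The core step is the inductive computation of $K_j$. I would perform a divide-and-conquer on $h_j$: starting from the initial data $\triangle=h_j$, $G=H_j$, $Z_j=\emptyset$, $u=\ell$, I build a $1/r$-net of $G$ of size $O(r\log r)$, triangulate the resulting subdivision of $\triangle$, and for every edge $e$ of this triangulation invoke the procedure $\intersection(e,Z_j,\triangle,G,u)$ of Lemma~\ref{lem:procedure} to locate the edge of $\partial K_j$ crossing $e$. Lemma~\ref{lem:triangle} lets me restrict the invariant $(G,u)$ consistently to each sub-triangle. I then recurse only on those sub-triangles whose boundary is actually crossed by $\partial K_j$; since $\partial K_j$ is a convex curve, at most $O(r\log r\cdot\alpha(r\log r))$ sub-triangles are visited at each recursion level. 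Writing $T(m,\mu)$ for the running time on a triangle meeting $m$ lines of $G$ and containing $\mu$ vertices of $K_j$, this yields the recurrence
$$T(m,\mu)\leq \sum_{\triangle'} T(m/r,\mu')+O\bigl(m\log^3(m+|\ch(\Gamma_j)|)+\mu\bigr),$$
valid for $m\ge A r\log r$ with $A$ an absolute constant.

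Choosing $r$ to be a sufficiently large constant depending on $\epsilon$ and unrolling the recurrence by the standard geometric-series argument gives $T(m,\mu)=O(m^{1+\epsilon}\log^3(m+|\ch(\Gamma_j)|))$; since $|\ch(\Gamma_j)|=O(n)$ throughout, each $K_j$ is produced in $O(n^{1+\epsilon})$ time, so all $K_j$'s together cost $O(n^{2+\epsilon})$, to which a final 3D convex-hull computation on the $K_j$'s adds only $O(n^2\log n)$. The part I expect to need most care is keeping the preconditions of Lemma~\ref{lem:procedure} intact throughout the $1/r$-net recursion --- that is, propagating the boundary-crossing edge set $Z_j$, the restricted line set $G'$, and the adjusted offset $u'$ correctly into each sub-triangle --- and in parallel proving, via convexity of $\partial K_j$, that only $O(r\log r\cdot\alpha(r\log r))$ of the $O(r^2\log^2 r)$ triangles branch further. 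This convexity-based pruning of the branching is precisely what drives the exponent in the recurrence from $2$ down to $1+\epsilon$, and is the step whose accounting dominates the proof.
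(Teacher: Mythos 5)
Your proposal is correct and follows essentially the same route as the paper's account of the Agarwal--Sharir--Welzl algorithm: duality to the plane arrangement, the sorted sequence $\langle h_1,\ldots,h_n\rangle$, inductive computation of each $K_j$ via $1/r$-net subdivision with the $\intersection$ procedure of Lemma~\ref{lem:procedure} and the level-restriction of Lemma~\ref{lem:triangle}, and the recurrence $T(m,\mu)\leq \sum_{\triangle'} T(m/r,\mu')+O(m\log^3(m+|\ch(\Gamma_j)|)+\mu)$ yielding $O(n^{1+\epsilon})$ per plane and $O(n^{2+\epsilon})$ overall. No substantive differences to report.
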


\subsection{Our Algorithm}
In this subsection, we show how to compute $K_j$ for an integer $1
\leq j \leq n$ in $O(n \log^4 n)$ time. This leads to the total running
time of $O(n^2 \log^4 n )$ by replacing the corresponding procedure 
of the algorithm in~\cite{center-region-3d}. 
Recall that the previous algorithm considers the triangles in the triangulation of
the arrangement of an $1/r$-net.  Instead, we consider finer triangles.

Again, consider a triangle $\triangle$, which is initially set to the
plane $h_j$.  We have a set $G$ of lines,  which is initially set to
$H_j=\{h_i\cap h_j\mid 1\leq i<j\}$,
and an integer $u$, which is initially set to $\ell$.  We compute a
$1/r$-net of the set system defined on the lines in $H_j$
intersecting $\triangle$ as the previous algorithm does.  Then we
triangulate the cells in the arrangement of the $1/r$-net.

\begin{figure}
  \begin{center}
    \includegraphics[width=0.9\textwidth]{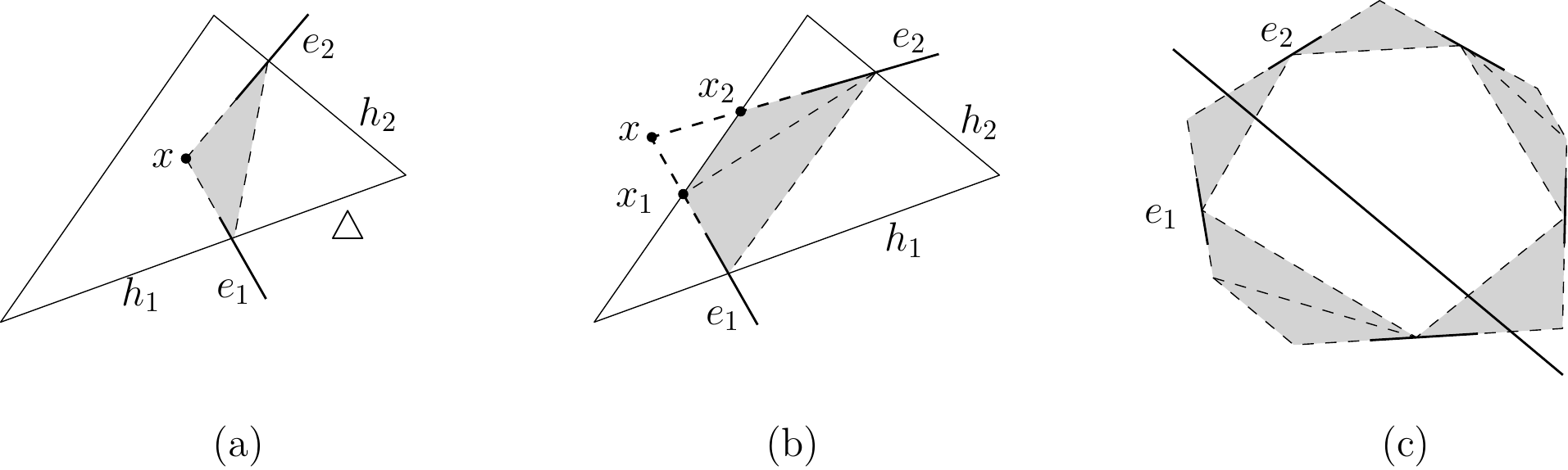}
		\caption{\small
			(a) The gray triangle is a triangle we obtained from $e_1$ and $e_2$.
			(b) If $x$ lies outside of $\triangle$, we obtain two triangles.
			(c) The gray region is the convex hull of all triangles we obtained. 
			Any line segment intersects at most four triangles in $E'$.
			\label{fig:subdivide}}
                    \end{center}
                  \end{figure}

For each edge $e$ of the triangulation of the cells of the arrangement of the $1/r$-net, 
we compute the edge of $K_j$ intersecting $e$ by applying $\intersection(e,Z_j, \triangle, G,u)$.
Let $E$ be the set of triangles in the triangulation at least one of whose sides intersects
$K_j$, and $K$ be the list of edges of $K_j$ intersecting triangles in
$E$ sorted in clockwise order along the boundary of $K_j$. 
We can compute $K$ although we do not know the whole description of $K_j$
because $K_j$ is convex.
Note that a triangle in $E$ is crossed by an edge of $K_j$, or intersected by two consecutive edges of $K$.
The previous algorithm applies this procedure again for the
triangles in $E$.  But, our algorithm subdivides the triangles in $E$
further.

For two consecutive edges $e_1$ and
$e_2$ in $K$, let $x$ be the intersection point of two lines, one containing
$e_1$ and one containing $e_2$.  See Figure~\ref{fig:subdivide}(a).  
Both $e_1$ and $e_2$ intersect a common triangle $\triangle$ in
$E$.  Let $h_1$ and $h_2$ be sides of $\triangle$ intersecting
$e_1$ and $e_2$, respectively. The two sides might coincide with each other.

If $x$ is contained in $\triangle$, then we consider the triangle
with three corners $x$, $e_1\cap h_1$, and $e_2\cap h_2$. See
Figure~\ref{fig:subdivide}(a).  If $x$ is not contained in
$\triangle$, let $x_1$ be the intersection point of the line containing
$e_1$ with the side of $\triangle$ other than $h_1$ and $h_2$. See
Figure~\ref{fig:recursion}(b).  Similarly, let $x_2$ be the
intersection point of the line containing $e_2$ with the side of
$\triangle$ other than $h_1$ and $h_2$.  In this case, we consider
two triangles: the triangle with corners $e_1\cap h_1$, $x_1$,
$e_2\cap h_2$ and the triangle with corners $x_1$, $x_2$,
$e_2\cap h_2$.

Now, we have one or two triangles for each pair of two consecutive
edges in $K$.  Let $E'$ be the set of such triangles.  By
construction, the union of all triangles in $E'$ contains the boundary
of $K_j$.  Thus, we can compute the
intersection of the boundary of $K_j$ with $\triangle$ by applying this procedure
recursively within $\triangle$. 

For each triangle in $E'$, we compute the intersection of the boundary
of $K_j$ with the triangle recursively as the previous algorithm does.
For each triangle $\triangle' \in E'$, we define $G(\triangle')$ to be
the set of lines in $G$ intersecting $\triangle'$. And we define $u'$
to be $u$ minus the number of line segments lying below $\triangle'$.
By Lemma~\ref{lem:triangle}, the level $u'$ of the arrangement of
$G(\triangle')$ coincides with $\Lambda_\ell$ within $\triangle$.  We
compute the intersection of $K_j$ with the sides of each triangle in
$E'$ by applying $\intersection$ procedure.

Now, we analyze the running time of our algorithm.  The following
technical lemma and corollary allow us to analyze the running time.
For an illustration, see Figure~\ref{fig:subdivide}(c).
\begin{lemma}
  \label{lem:four}
  A line intersects at most four triangles in $E'$.
\end{lemma}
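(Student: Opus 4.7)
The plan is to combine the convexity of $K_j$ with the cyclic arrangement of the triangles in $E'$ along $\partial K_j$. As a preliminary structural step, I would first show that two edges $e_i, e_{i+1}$ consecutive in $K$ and sharing a common small triangle $\triangle_0 \in E$ must actually be adjacent on $\partial K_j$: since $\partial K_j \cap \triangle_0$ is a single convex arc (being the boundary of the intersection of two convex sets), any edge of $K_j$ lying between them on $\partial K_j$ inside $\triangle_0$ would itself belong to $K$, contradicting their consecutiveness. Hence the point $x$ of the construction is always a vertex of $K_j$, and the triangles of $E'$ inherit a natural cyclic order $T_1, \ldots, T_m$ along $\partial K_j$, with each $T_i$ being a wedge-like piece at the vertex of $K_j$ shared by $e_i$ and $e_{i+1}$ (or the corresponding pair of sub-triangles in the case $x \notin \triangle_0$), two of whose sides lie on the supporting lines of $e_i, e_{i+1}$ or on the third side of $\triangle_0$.

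The main argument then uses the convexity of $K_j$ and of an associated inner region to count how many triangles a line $\ell$ can meet. Specifically, I would observe that the region $K_j \cap \triangle \setminus \bigcup_{T \in E'} T$ is convex---being $K_j$ intersected with the half-planes on the opposite side of each chord $ab$ from the corresponding vertex of $K_j$---so $\ell$ enters it at most twice. Combined with the fact that $\ell$ crosses $\partial K_j$ at most twice, this yields at most two connected arcs of $\ell$ that meet $\bigcup_{T \in E'} T$, one around each crossing of $\partial K_j$. I would then show that each such arc passes through at most two triangles of $E'$: the two whose boundaries contain a portion of the edge $e_p$ of $K_j$ hosting the crossing, namely the ones associated with the pairs $(e_{p-1}, e_p)$ and $(e_p, e_{p+1})$, since only these have a side on the supporting line of $e_p$.

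The main obstacle is handling the case $x \notin \triangle_0$ uniformly, as a pair then contributes two sub-triangles to $E'$ whose geometry relative to $\partial K_j$ is more intricate and may not lie entirely inside $K_j$. I would address this by viewing the two sub-triangles together as a convex quadrilateral with sides on the supporting lines of $e_i, e_{i+1}$ and on the side $h_3$ of $\triangle_0$, so that $\ell$ meets their union in a single connected segment; the ``at most two per crossing'' bound then continues to hold even when both sub-triangles are intersected. Putting these observations together yields the bound of at most four triangles in $E'$ intersected by $\ell$.
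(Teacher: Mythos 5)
Your overall strategy---cover $\partial K_j$ by the triangles of $E'$, show that a line meets this covering in at most two connected pieces, and charge at most two triangles to each piece---is the same as the paper's, but the structural claim it hinges on is false as stated, and it fails exactly in the case you yourself flag as the main obstacle. When the apex $x$ lies outside the chosen triangle $\triangle_0\in E$, the pair $(e_1,e_2)$ contributes the quadrilateral with corners $e_1\cap h_1$, $x_1$, $x_2$, $e_2\cap h_2$, and removing it from $K_j$ does \emph{not} remove everything on the $x$-side of the chord: the pocket with corners $x$, $x_1$, $x_2$ (the part of $K_j$ beyond the third side $h_3$) survives in $K_j\cap\triangle\setminus\bigcup_{T\in E'}T$. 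So that set is not ``$K_j$ intersected with the half-planes opposite the chords''; it is that convex core plus one pocket per such pair, each pocket separated from the core by its quadrilateral, so the set is not even connected, let alone convex, and the step ``$\ell$ enters it at most twice, hence at most two arcs of $\ell$ meet $\bigcup_{T\in E'}T$'' is unsupported. Your proposed treatment of the $x\notin\triangle_0$ case (view the two sub-triangles as one convex quadrilateral so that $\ell$ meets their union in one segment) addresses a different point and does not repair this. The paper sidesteps the issue precisely by not taking $K_j\cap\triangle$ as the outer region: it sandwiches $\bigcup E'$ between two convex polygons, an outer one $C_1$ (essentially $\ch(\bigcup_{T\in E'}T)$, which is trimmed along the sides $h_3$, so the pockets lie outside it) and an inner one $C_2$ with vertices on the edges of $K$, notes that each connected component of $\textsf{int}(C_1)\setminus\textsf{int}(C_2)$ consists of the one or two triangles of a single pair, and uses that a line meets at most two such components. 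To salvage your route you must either switch to $\ch(\bigcup_{T\in E'}T)$ as the outer body (i.e., reproduce the paper's argument) or add an argument that each pocket borders only its own quadrilateral and $\partial K_j$, hence can occur only at the two endpoints of the segment $\ell\cap K_j\cap\triangle$ and therefore creates no additional arcs; neither is in the proposal.

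Two further soft spots. Your preliminary step is justified by the claim that $\partial K_j\cap\triangle_0$ is a single convex arc, which is false in general (the boundary of a convex polygon can cross a triangle in several disjoint arcs); moreover $x$ need not be a vertex of $K_j$ at all, since $K$ only contains edges of $K_j$ meeting triangles inside the current cell $\triangle$, so two edges consecutive in $K$ can be separated along $\partial K_j$ by a portion lying outside $\triangle$---which is plausibly exactly when the $x\notin\triangle_0$ case arises, and then the wedge or quadrilateral need not even be contained in $K_j$, further undermining your description of the complement. Finally, the per-arc count ``only the pairs $(e_{p-1},e_p)$ and $(e_p,e_{p+1})$ have a side on the supporting line of $e_p$'' does not by itself bound the arc: a connected piece of $\ell$ inside $\bigcup E'$ could a priori continue from such a triangle into a triangle of a neighbouring pair through shared boundary; you need the additional fact (implicit in the paper's component-per-pair statement) that triangles of distinct pairs meet only in isolated points on the inner hull, so that each connected piece stays within the one or two triangles of a single pair. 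As written, the proposal does not establish the lemma.
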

\begin{proof}
  By construction, the union of all triangles in $E'$ coincides with
  $C_1\setminus C_2$ for a convex polygon $C_1$ containing all edges of $K$ on its boundary
  and a convex polygon $C_2$ whose vertices are on edges of $K$.  See
  Figure~\ref{fig:subdivide}(c). 
  A connected component of  $\textsf{int}(C_1)\setminus \textsf{int}(C_2)$ is the union of (one or two) finer triangles in $E'$ 
  obtained from a single triangle in $E$, where
  $\textsf{int}(C)$ is the interior of a convex polygon $C$.
  Since a line intersects at most two connected components of
  $\textsf{int}(C_1)\setminus \textsf{int}(C_2)$, a line intersects at most four triangles in $E'$.
\end{proof}
\begin{corollary}
  \label{corollary:linear}
  The total sum of the numbers of lines in $G(\triangle')$ over all triangles
  $\triangle' \in E'$ is four times the number of lines in $G$.
\end{corollary}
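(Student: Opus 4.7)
My plan is a one-shot double-counting argument that piggybacks directly on Lemma~\ref{lem:four}. By the definition introduced just before the corollary, $G(\triangle')$ is the set of lines in $G$ that intersect $\triangle'$. Reordering the double sum over pairs $(\triangle',\ell)$ with $\ell\in G$ and $\ell\cap\triangle'\neq\emptyset$,
\[
\sum_{\triangle' \in E'} |G(\triangle')| \;=\; \#\{(\triangle',\ell) : \triangle' \in E',\ \ell \in G,\ \ell \cap \triangle' \neq \emptyset\} \;=\; \sum_{\ell \in G} \#\{\triangle' \in E' : \ell \cap \triangle' \neq \emptyset\}.
\]
Lemma~\ref{lem:four} asserts that the inner count is at most $4$ for every single line $\ell\in G$. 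Summing this uniform bound over all $|G|$ lines yields the quantity in the corollary, namely four times the number of lines in $G$, which is the constant factor that feeds into the recurrence analysis for computing $K_j$.

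There is essentially no obstacle: the entire geometric content sits in Lemma~\ref{lem:four}, and the corollary is extracted by a pure incidence-counting step. The only subtleties worth mentioning are (i) verifying that the two orders of summation really enumerate the same set of incidence pairs $(\triangle',\ell)$, which is immediate from the definition of $G(\triangle')$; and (ii) keeping the explicit constant $4$ through the argument so the corollary matches its stated form, rather than weakening it to an asymptotic $O(|G|)$. Neither step requires any additional geometric reasoning beyond what Lemma~\ref{lem:four} already provides, so I would present the corollary with a two-line proof consisting of the displayed identity above followed by the invocation of Lemma~\ref{lem:four}.
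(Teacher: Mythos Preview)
Your proposal is correct and matches the paper's approach: the paper states the corollary with no proof at all, treating it as an immediate consequence of Lemma~\ref{lem:four}, and your double-counting argument is exactly the implicit step being taken. The only minor point is that Lemma~\ref{lem:four} gives \emph{at most} four, so your argument yields $\sum_{\triangle'\in E'}|G(\triangle')|\le 4|G|$ rather than equality; the paper's phrasing ``is four times'' is slightly loose, but since the corollary is only used to obtain an $O(n)$ bound in the running-time analysis, your upper bound is precisely what is needed.
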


We iteratively subdivide $h_j$ using $1/r$-nets until we obtain $K_j$.
Initially, we consider the whole plane $h_j$, which intersects at most
$n$ lines in $H_j$.  In the $i$th iteration, each triangle we consider
intersects at most $n/r^i$ lines in $H_j$ by the property of $1/r$-nets.
This means that in $O(\log_r n)$ iterations, every triangle intersects
a constant number of lines in $H_j$. Then we stop subdividing the plane
and compute $K_j$ lying inside each triangle in constant time.

Consider the running time for each iteration.  For each triangle, we
first compute a $1/r$-net in time linear to the number of lines
intersecting the triangle.  Then we apply $\intersection$ procedure of
Lemma~\ref{lem:procedure} for each edge in the arrangement of the
$1/r$-net. This takes $O(m\log^3 m/r^2)$ time, where $m$ is the number
of lines in $H_j$ intersecting the triangle.

In each iteration, we have $O(n)$ triangles in total because every triangle
contains at least one vertex of $K_j$.  Moreover, the sum of the
numbers of lines intersecting the triangles is $O(n)$ by
Corollary~\ref{corollary:linear}.  This concludes that the running
time for each iteration is $O(n \log^3 n)$.

Since we have $O(\log_r n)$ iterations, we can compute $K_j$ in
$O(n \log^4 n)$ time.  Recall that the convex hull of the level $\ell$
of the arrangement of the planes is the convex hull of $K_j$'s for
all indices $1\leq j\leq n$.  Therefore, we can compute the level
$\ell$ in $O(n^2 \log^4 n)$ time, and compute the set of points of
depth at least $\ell$ in the same time.

\begin{theorem}
  Given a set of $n$ points in $\mathbb{R}^3$ and an integer
  $\ell \geq 0$, the set of points of depth at least $\ell$ can be
  computed in $O(n^2\log^4 n)$ time.
\end{theorem}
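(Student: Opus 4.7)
The plan is to adopt the overall framework of Agarwal, Sharir and Welzl but to replace only the per-plane subroutine that computes $K_j$, bringing its cost down from $O(n^{1+\epsilon})$ to $O(n\log^4 n)$. I would begin by applying the standard duality transform to obtain a set $S^*$ of $n$ planes in $\mathbb{R}^3$, so that the set of points of Tukey depth at least $\ell$ becomes the convex hull of the level $\Lambda_\ell$ in the arrangement of $S^*$. I would sort the planes in the order $\langle h_1,\ldots,h_n\rangle$ described earlier, and for each $j$ compute the convex polygon $K_j \subseteq h_j$ satisfying $\ch(\Lambda_\ell\cap h_j) \subseteq K_j \subseteq \ch(\Lambda_\ell)\cap h_j$. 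Since $\ch(\bigcup_j K_j) = \ch(\Lambda_\ell)$, combining these yields the answer.

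To compute a single $K_j$, I would iteratively refine a triangle $\triangle$ on $h_j$, starting from $\triangle = h_j$, $G = H_j$, $u = \ell$. In each iteration I would build a $1/r$-net of the lines of $G$ crossing $\triangle$, triangulate the arrangement of that net, and invoke $\intersection(e, Z_j, \triangle, G, u)$ of Lemma~\ref{lem:procedure} on every edge $e$ of the triangulation to recover the sorted list $K$ of edges of $K_j$ that cross those triangles. Rather than recursing directly on triangles from the $1/r$-net as in~\cite{center-region-3d}, I would then carry out the finer subdivision: for each pair of consecutive edges in $K$ that share an ambient triangle $\triangle \in E$, form one or two smaller triangles as described (using the intersection point $x$ of their supporting lines when $x\in\triangle$, or splitting into two triangles along the opposite side otherwise). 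The resulting collection $E'$ has the property that its union contains the portion of $\partial K_j$ inside $\triangle$, so it suffices to recurse into each $\triangle' \in E'$ with $G(\triangle')$ equal to the lines of $G$ crossing $\triangle'$ and $u'$ equal to $u$ minus the number of lines of $G$ entirely below $\triangle'$; Lemma~\ref{lem:triangle} certifies that $u'$ is still the correct level for $\Lambda_\ell$ inside $\triangle'$.

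The main obstacle, and the reason for the finer subdivision, is controlling the total work per iteration. For this I would use Lemma~\ref{lem:four}: the triangles of $E'$ arising from a single parent all lie in an annular region $C_1\setminus C_2$ bounded by two nested convex polygons, so any line meets at most two connected components of this annulus and therefore at most four triangles of $E'$. Corollary~\ref{corollary:linear} then gives $\sum_{\triangle'\in E'}|G(\triangle')| = O(|G|)$. Summed across an entire iteration, the total line count stays $O(n)$, and because every surviving triangle contains at least one vertex of $K_j$ the total number of surviving triangles is also $O(n)$; thus the cost of all $\intersection$ calls in one iteration is $O(n\log^3 n)$ by Lemma~\ref{lem:procedure}.

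Finally, the defining property of $1/r$-nets shrinks the number of lines crossing each triangle by a factor of $r$ per level, so after $O(\log_r n) = O(\log n)$ iterations every remaining triangle meets only $O(1)$ lines and the corresponding piece of $K_j$ can be finished in constant time. Multiplying the per-iteration bound $O(n\log^3 n)$ by $O(\log n)$ gives $O(n\log^4 n)$ per plane, and summing over $j=1,\ldots,n$ yields the claimed $O(n^2\log^4 n)$ bound for computing $\ch(\Lambda_\ell)$, which by duality is the set of points of Tukey depth at least $\ell$.
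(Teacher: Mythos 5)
Your proposal is correct and follows essentially the same route as the paper: the Agarwal--Sharir--Welzl framework with the per-plane computation of $K_j$ replaced by the finer subdivision into triangles $E'$ from consecutive edges of $K$, using Lemma~\ref{lem:four} and Corollary~\ref{corollary:linear} to bound each of the $O(\log n)$ iterations by $O(n\log^3 n)$, giving $O(n\log^4 n)$ per plane and $O(n^2\log^4 n)$ overall.
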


\section{Computing the Colorful Center Region in \texorpdfstring{$\mathbb{R}^2$}{R2}}
\label{sec:second-problem}
In this section, we consider the colored version of the Tukey depth in
$\mathbb{R}^2$. Let $\ell>0$ be an integer at most $n$.  
We use integers from 1 to $k\in\mathbb{N}$ to represent
colors. Let $S$ be a
set of $n$ points in $\mathbb{R}^2$ each of which has exactly one
color among $k$ colors. We assume that for each color $i$, there exists at least one point of color $i$
in $S$.

The definitions of the Tukey depth and center region are extended to this
colored version. The \emph{colorful (Tukey) depth} of a point $x$ in $\mathbb{R}^2$ is defined as 
the minimum number of different colors of points contained in a closed 
halfspace containing $x$.
The \emph{colorful center region} is the set of points in $\mathbb{R}^2$ whose colorful Tukey depths
are at least $\lceil k/(d+1) \rceil$.
A \emph{colorful Tukey median} is a point in $\mathbb{R}^2$ with the largest colorful Tukey depth.

We provide a lower bound of the colorful depth of a
colorful Tukey median, which is analogous to properties of the
standard Tukey depth.  The proof is similar to the one for the standard Tukey depth.
\begin{lemma}
  \label{lem:lower_bound}
  The colorful depth of a colorful Tukey median is at least $\lceil
  k/(d+1) \rceil$. Thus, the colorful center region is not empty.
\end{lemma}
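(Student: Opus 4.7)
The plan is to mimic the classical Helly-theorem proof of the centerpoint theorem, with color-classes playing the role that individual points play in the standard argument. For each subset $C\subseteq\{1,\dots,k\}$ of colors with $|C|>dk/(d+1)$, let $S_C\subseteq S$ denote the points whose color lies in $C$. I will show that the family $\mathcal{F}=\{\mathrm{conv}(S_C): |C|>dk/(d+1)\}$ has a common point and that any such point realizes colorful depth at least $\lceil k/(d+1)\rceil$; this simultaneously yields the depth bound and non-emptiness of the colorful center region, since a colorful Tukey median has the largest colorful depth among all points in $\mathbb{R}^d$.

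First I would verify the hypothesis of Helly's theorem in $\mathbb{R}^d$. Given $d+1$ color sets $C_1,\dots,C_{d+1}$ with $|C_i|>dk/(d+1)$, their complements in $\{1,\dots,k\}$ have sizes summing to strictly less than $(d+1)\cdot k/(d+1)=k$, so by the pigeonhole principle $C_1\cap\dots\cap C_{d+1}$ contains some color $c$. Picking any $p\in S$ of color $c$ (which exists since every color is represented) gives $p\in S_{C_i}$ for every $i$, and hence $p\in\bigcap_{i=1}^{d+1}\mathrm{conv}(S_{C_i})$. Helly's theorem then produces a point $x\in\bigcap_{F\in\mathcal{F}}F$.

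Next I would argue by contradiction that $x$ has colorful depth at least $\lceil k/(d+1)\rceil$. Suppose some closed halfspace $h$ containing $x$ meets at most $\lceil k/(d+1)\rceil-1$ distinct colors of $S$. Let $C$ be the set of colors that are \emph{entirely absent} from $h\cap S$; then $|C|\geq k-\lceil k/(d+1)\rceil+1>dk/(d+1)$, using $\lceil k/(d+1)\rceil<k/(d+1)+1$. Every point of $S_C$ lies in the open halfspace $\mathbb{R}^d\setminus h$, so $\mathrm{conv}(S_C)$ is disjoint from $h$ and in particular does not contain $x$; this contradicts $\mathrm{conv}(S_C)\in\mathcal{F}$ and $x\in\bigcap\mathcal{F}$.

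The main subtlety compared with the standard centerpoint proof is that a single color class may be split across $h$ and its complement, so one must phrase the pigeonhole step in terms of colors wholly absent from $h$ rather than in terms of raw point counts; this is what makes working with the sets $S_C$ indexed by \emph{color} subsets (rather than the more obvious point subsets) the right formalization. Once this bookkeeping is in place, the rest of the argument is a direct color-theoretic analog of the classical proof.
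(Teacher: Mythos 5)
Your proof is correct, but it follows a genuinely different route from the paper's. The paper proves the lemma by a two-line reduction to the classical centerpoint theorem: choose an arbitrary rainbow subset $S'\subseteq S$ containing exactly one point of each color, and let $c$ be a Tukey median (centerpoint) of $S'$; every closed halfspace containing $c$ contains at least $\lceil k/(d+1)\rceil$ points of $S'$, and since these have pairwise distinct colors, the colorful depth of $c$ with respect to $S$ is at least $\lceil k/(d+1)\rceil$. You instead redo the Helly argument from scratch at the level of color classes, intersecting the convex hulls $\mathrm{conv}(S_C)$ over all color subsets $C$ with $|C|>dk/(d+1)$; your pigeonhole step on the complements, the strict inequality $\lceil k/(d+1)\rceil<k/(d+1)+1$ needed to get $|C|>dk/(d+1)$ in the contradiction step, and the observation that a closed halfspace meeting no point of $S_C$ is disjoint from $\mathrm{conv}(S_C)$ are all handled correctly, and the nonemptiness of each $S_C$ is covered by the standing assumption that every color occurs in $S$. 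The paper's reduction buys brevity and, implicitly, the slightly stronger remark that the standard center region of \emph{any} rainbow selection is contained in the colorful center region; your argument buys a choice-free, self-contained proof showing that the entire set $\bigcap_{|C|>dk/(d+1)}\mathrm{conv}(S_C)$ consists of colorful centerpoints, the exact colorful analogue of the usual Helly-based proof of the centerpoint theorem. Both proofs leave implicit the routine fact that the colorful depth takes finitely many integer values, so its maximum is attained and the exhibited point lower-bounds the depth of a colorful Tukey median.
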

\begin{proof}
	Let $S$ be a set of $n$ colored points in $\mathbb{R}^d$ each with one of $k$
	colors. We choose any $k$ points in $S$ with distinct colors and denote the set
	of the $k$ points by $S'$.	Let $c$ be a Tukey median of $S'$. 
	The Helly's theorem implies that the depth of $c$ with respect to $S'$ is at least $\lceil k/(d+1) \rceil$.
	By definition, the colorful depth of $c$ with respect to $S$ 
	is at least $\lceil k/(d+1) \rceil$.
\end{proof}

In this section, we present an algorithm to compute the set of all
points of colorful depth at least $\ell$ with respect to $S$ in $\mathbb{R}^2$.  By
setting $\ell=\lceil k/(d+1) \rceil$, we can compute the center region
using this algorithm.
Our algorithm follows the approach in~\cite{center-point-2d}.

\subsection{Duality Transform}
As the algorithm in~\cite{center-point-2d} does, we use a duality of points and lines.  Due to the properties
of the duality, our problem reduces to computing the convex hull of a level of the
arrangement of $x$-monotone polygonal curves.

The standard duality transform maps a point $s$ to a line $s^*$, and a
line $h$ to a point $h^*$.  Let $S^* = \{s^* \mid s \in S\}$. Each line
$s^*$ in $S^*$ has the same color as $s$.  We consider the
colorful depth of a point in the dual space.  We define a
\emph{colorful level} of a point $x$ in the dual plane 
with respect to $S^*$ to be the number of different colors of lines lying below $x$ or
containing $x$.  A point $x$ in the primal plane 
with respect to $S$ has colorful depth at least $\ell$ if and only if all points in the
line $x^*$ have colorful level at least $\ell$ and at most $k-\ell$ in the dual plane.

Note that for a color $i\in\{1,\ldots,k\}$, a line in $S^*$ with color $i$ lies below a point $x$ 
if and only if $x$ lies above the lower envelope of lines in $S^*$ of color
$i$ in the dual plane.  With this property, we can give an alternative definition of the
colorful level.  For each color $i$, we consider the lower envelope
$C_i$ of all lines in $S^*$ of color $i$.  The colorful level of a
point $x \in \mathbb{R}^2$ is the number of different lower envelopes $C_i$
lying below $x$ or containing $x$.  In other words, the colorful level
of a point with respect to $S^*$ is the level of the point with
respect to the set of the lower envelopes $C_i$ for $i=1,\ldots,k$.

In the following, we consider the arrangement of the lower
envelopes $C_i$ for $i=1,\ldots,k$.  In this case, a cell in this arrangement is not
necessarily convex.  Moreover, this arrangement does not satisfy the
property in Lemma~4.1 of~\cite{center-region-2d}.  Thus, the algorithm
in~\cite{center-region-2d} does not work for the colored version
as it is.

Let $L_\ell$ be the set of points in the dual plane of colorful level
at most $\ell$.  Similarly, let $U_\ell$ be the set of points in
the dual plane of colorful level at least $\ell$. By definition, the dual line of a
point of colorful depth at least $\ell$ is contained neither in $L_\ell$ nor
in $U_{k-\ell}$ in the dual plane.  Moreover, such a line lies outside of both the
convex hull $\ch(L_\ell)$ of $L_\ell$ and the convex hull $\ch(U_{k-\ell})$ of $U_{k-\ell}$.

Thus, once we have $\ch(L_\ell)$  and $\ch(U_{k-\ell})$,
we can compute the set of all points of colorful depth
at least $\ell$ in linear time.  In the following subsection, we show how to
compute $\ch(L_\ell)$ in $O(n\log^4 n)$ time.  The convex hull of $U_{k-\ell}$
can be computed analogously.

\subsection{Computing the Convex hull of \texorpdfstring{$L_\ell$}{L}}
In this subsection, we present an algorithm for computing the convex hull $\ch(L_\ell)$
of $L_\ell$. We subdivide the plane into $O(n)$ vertical slabs and compute
$\ch(L_\ell)$
restricted to each vertical slab in Section~\ref{sec:main}.
To do this, we compute the intersection between $\ch(L_\ell)$ 
and a vertical line defining a vertical slab. This subprocedure is described in Section~\ref{sec:color-subprocedure}.

\subsubsection{Subprocedure: Computing the Intersection of the Convex Hull with a Line}
\label{sec:color-subprocedure}
Let $h$ be a vertical line in $\mathbb{R}^2$.  In this subsection, we
give a procedure to compute the intersection of $h$ with $\ch(L_\ell)$.
This procedure is used as a subprocedure of the
algorithm in Section~\ref{sec:main}.  We modify the procedure
by Matous\v{e}k~\cite{center-region-2d}, which deals with the standard
(noncolored) version of the problem.

The following lemma gives a procedure for determining whether a point $x$ on $h$ lies above
$\ch(L_\ell)$ or not. This procedure is used as a subprocedure in the
procedure for computing the intersection of $h$ and $\ch(L_\ell)$ described in Lemma~\ref{lem:intersection}.

\begin{lemma}
  \label{lem:first-level}
  We can determine in $O(n\log n + k\log^2 n)$ time whether a given
  point $x$ lies above the convex hull of $L_\ell$.  In addition, we
  can compute the lines tangent to $\ch(L_\ell)$ passing through $x$
  in the same time.
\end{lemma}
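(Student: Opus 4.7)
The plan is to adapt Matou\v{s}ek's subprocedure from~\cite{center-region-2d} for the non-colored case, with the $k$ lower envelopes $C_1,\ldots,C_k$ (of total complexity $O(n)$) playing the role of the input lines. The preprocessing step, done once per call, takes $O(n\log n)$ time: build each $C_i$ as an $x$-monotone convex chain and equip it with a balanced search tree supporting, in $O(\log n)$ time, vertical-line queries and tangent-direction queries on $C_i$. This is the source of the $O(n\log n)$ term in the bound.

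Next, I would reformulate the decision problem in convex-hull terms. Since a point lying below a point of colorful level at most $\ell$ also has colorful level at most $\ell$, the region $L_\ell$ is downward-closed, and so is $\ch(L_\ell)$; its upper boundary is an $x$-monotone concave chain. The point $x$ lies strictly above $\ch(L_\ell)$ iff there is a line through $x$ whose closed lower halfplane contains the upper boundary of $L_\ell$, i.e., the $\ell$-level of the arrangement of $C_1,\ldots,C_k$. Equivalently, $x$ is above $\ch(L_\ell)$ iff both tangent lines from $x$ to $\ch(L_\ell)$ exist, and in that case these two tangents are exactly the extreme separating slopes. I would therefore binary-search for the separating slope $\theta$, locating the left and right tangents simultaneously.

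The per-slope test asks: does the line $\lambda_\theta$ through $x$ with slope $\theta$ lie weakly above every point of the $\ell$-level? I would answer this without materializing the $\ell$-level: for each color $i$, compute in $O(\log n)$ time the at most two intersection points of $\lambda_\theta$ with $C_i$, and the side of $\lambda_\theta$ containing $C_i$'s tangent point in the upward-normal direction. Combining these $O(k)$ pieces of information with a sorted sweep along $\lambda_\theta$, I can determine, at every point of $\lambda_\theta$, exactly how many envelopes lie below it, and thus decide whether the $\ell$-level dips above $\lambda_\theta$ anywhere. This test costs $O(k\log n)$. Performing it at each of $O(\log n)$ slopes chosen by a standard parametric/slope-selection search over the $O(n)$ critical slopes (vertices of the $C_i$'s and their pairwise intersection directions) gives the claimed $O(n\log n + k\log^2 n)$ bound; the two tangent lines are the slopes at which the binary search terminates.

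The main obstacle is ensuring that the per-slope test really runs in $O(k\log n)$ time even though the $\ell$-level of the arrangement of the $C_i$'s can have complexity $\Theta(n)$. The crucial observation is that we never need the $\ell$-level itself, only the maximum of a linear functional over it; and because each $C_i$ is convex, we can locate that maximum among $O(k)$ ``candidate'' events (one per envelope) with $O(\log n)$ work per candidate, rather than inspecting the full $\Theta(n)$ description of the level. Managing the binary search over slopes so that the $O(\log n)$ outer iterations suffice to isolate a tangent — rather than only narrow the slope to the granularity of a single vertex of an envelope — is the delicate part, and it is here that the argument in~\cite{center-region-2d} must be followed closely, with envelopes substituted for lines throughout.
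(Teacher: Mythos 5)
Your overall strategy (reduce to finding tangents from $x$ by a rotational search, with a decision oracle that exploits the fact that a line or ray crosses each convex envelope $C_i$ at most twice, so each decision costs $O(k\log n)$) is the same as the paper's, and your decision primitive is essentially the paper's ray test in line form. However, there is a genuine gap exactly where you flag ``the delicate part.'' Your candidate set of critical slopes is mis-specified: if it includes the pairwise intersections of the envelopes, there are up to $\Theta(nk)$ of them (two concave chains of sizes $m_i,m_j$ can cross $\Theta(m_i+m_j)$ times), not $O(n)$, and even enumerating them destroys the $O(n\log n)$ budget when $k=\Theta(n)$; if it includes only the $O(n)$ envelope vertices, then after the binary search you have merely confined the tangent slope to an interval in which the set of crossed envelope edges is fixed, but the tangent itself passes through a vertex of $L_\ell$ that is typically an intersection of two envelope edges, not an envelope vertex, so the search does not terminate with the tangent and the second claim of Lemma~\ref{lem:first-level} (outputting the tangent lines) is not established. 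Saying that Matou\v{s}ek's argument ``must be followed closely, with envelopes substituted for lines'' is precisely the step that needs proof, since this is where the colored setting differs from the line-arrangement setting.

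The paper closes this gap with a second parametric-search stage that you do not supply: within the vertex-defined interval $I$, the (at most $2k$, by convexity of each $C_i$) envelope edges crossed by the unknown tangent ray are sorted along that ray \emph{implicitly}, using a parallel sorting scheme~\cite{parallel-sorting} sped up by Cole's technique~\cite{cole-parametric}, where each comparison is resolved by one call to the $O(k\log n)$-time decision oracle at the slope where the two segments swap; this costs $O(k\log k\log n)$ and, once the order is fixed, the first point of $L_\ell$ met by the ray is determined combinatorially, so the tangent line can be written down exactly. A secondary issue: your per-slope predicate on full lines (``is the $\ell$-level weakly below $\lambda_\theta$ everywhere'') is not monotone in $\theta$ (its yes-set is an interval), so a plain binary search does not apply as stated; the paper avoids this by searching for each tangent separately with the monotone ray predicate ``does $\gamma_\theta$ meet $L_\ell$,'' and you would need either to do the same or to argue that the side on which the violation occurs steers the search. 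Without the Cole-based refinement stage (or an equivalent mechanism), the proposal does not achieve the stated $O(n\log n+k\log^2 n)$ bound together with the exact tangent computation.
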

\begin{proof}
By definition, $L_\ell$ is $x$-monotone and $\ch(L_\ell)\cap g$ is
a ray (halfline) going vertically downward for any vertical line $g$.
A point $x$  lies above $\ch(L_\ell)$ if and only if there is a line passing
through $x$ and tangent to $\ch(L_\ell)$.
Note that there are exactly two such tangent lines for a point $x$ lying above
$\ch(L_\ell)$: one is tangent to $\ch(L_\ell)$ at a point lying left to $x$, and the other is
tangent to $\ch(L_\ell)$ at a point lying right to $x$.
Among them, we show how to compute the line tangent to $\ch(L_\ell)$ at a
point lying left to $x$.  The other line can be computed
analogously.  To apply parametric search, we give a
decision algorithm to check whether a given ray $\gamma$ starting from 
$x$ lies above $\ch(L_\ell)$ or not.

\paragraph{Checking whether a ray \texorpdfstring{$\gamma$}{r} lies above the convex hull.}
We first compute the intersection points of $\gamma$ with $C_i$ for each
$i$, and sort them along
$\gamma$.  Recall that $C_i$ is the lower envelope of lines in
$S^*$ of color $i$.  Since each $C_i$ is a convex
polygonal curve, the total number of intersection points is at most $2k$.  The
intersection points can be computed and sorted in $O(k\log n + k\log
k)=O(k \log n)$ time.  We walk along $\gamma$ from a point at infinity and compute
the colorful level of each intersection point one by one.  We can
compute the intersection points of $L_\ell$ with $\gamma$ in $O(k)$ time, and compute
the intersection points of $\ch(L_\ell)$ with $\gamma$ in the same time.

\begin{figure}
	\begin{center}
		\includegraphics[width=0.35\textwidth]{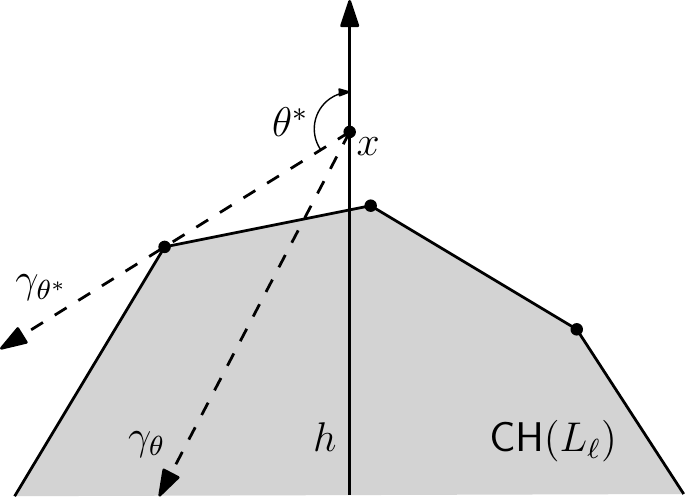}
		\caption{\small The ray $\gamma_{\theta^*}$ is tangent to $\ch(L_\ell)$.
			For any $\theta \leq \theta^*$, the ray $\gamma_{\theta}$ intersects $\ch(L_\ell)$.
			But for any $\theta^* < \theta < \pi$, the ray $\gamma_{\theta^*}$ does not
			intersect $\ch(L_\ell)$.
			\label{fig:first-level}}
	\end{center}
\end{figure}

\paragraph{Applying parametric search.}
We use this decision algorithm to check whether there is a
line passing through $x$ and tangent to $\ch(L_\ell)$.  For an
angle $\theta \in [0, \pi)$, let $\gamma_\theta$ denote the ray
starting from $x$ such that the clockwise angle from
$\gamma_\theta$ to the $y$-axis (towards the positive
direction) is $\theta$.  See
Figure~\ref{fig:first-level}.  Let $\theta^*$ be the angle such
that $\gamma_{\theta^*}$ is tangent to $\ch(L_\ell)$ at some point
lying left to $x$.  Then, for any angle $\theta \in
[0,\theta^*]$, the ray $\gamma_\theta$ does not intersect $\ch(L_\ell)$.
For any angle $\theta \in (\theta^*,\pi)$, the ray $\gamma_\theta$
intersects $\ch(L_\ell)$.

Initially, we have an interval $[0,\pi]$ for
$\theta^*$.  In the following, we reduce the 
interval, which contains $\theta^*$, until we find $\theta^*$.
Consider the vertices of $C_i$ lying to the left of $x$ for every
$i$.  We compute all angles $\theta'$ such that $\gamma_{\theta'}$
goes through a vertex of $C_i$ for some $i$ and sort them in the
increasing order.  Let $\theta_1,\ldots,\theta_{n'}$ be the
angles in the increasing order. Note that $n' \leq n$.  We apply binary search on
these angles using the decision algorithm to find the interval $I=(\theta_j, \theta_{j+1})$
containing $\theta^*$ for some $1\leq j\leq n'$ in
$O(n \log n+k\log^2 n )$ time.

Recall that the lower envelopes $C_i$'s (for $i=1,\ldots,k$)
consist of $O(n)$ line segments in total.  For any angle
$\theta \in I$, the set of the line segments intersecting
$\gamma_\theta$ remains the same.  But the order of the
intersection points of such line segments with $\gamma_\theta$ along
$\gamma_\theta$ may change over angles in $I$.  Now, we will find an interval $I' \subset
I$ containing $\gamma_{\theta^*}$ such that the order of the
intersection points remains the same for any $\theta \in I'$.

Let $\mathcal{C}$ be the set of the line segments of $C_i$'s intersected by
$\gamma_\theta$ for some $\theta \in I$. Since $C_i$'s are convex,
the size of $\mathcal{C}$ is at most $2k$. We sort the line segments
in $\mathcal{C}$ along $\gamma_{\theta^*}$ without explicitly
computing $\gamma_{\theta^*}$ as follows.  To sort the line segments, we need
to determine the order of two line segments, say $s_1$ and
$s_2$, along $\gamma_{\theta^*}$.  We can do this using the
decision algorithm.  If the two line segments do not intersect
each other, we can compute the order of them directly since the order of them
remains the same over angles in $I$.
Otherwise, let $s$ be the intersection point of $s_1$ and $s_2$, and
$\theta_s$ be the angle such that $\gamma_{\theta_s}$ intersects
$s$.  If $\theta_s \notin I$, we can compute the order of the
two line segments directly.  If not, we apply the decision
algorithm with input $\theta_s$. The decision algorithm
determines whether $\theta^* \geq \theta_s$ or not. So, we can
reduce the interval $I$ and determine the order of $s_1$ and
$s_2$.

We need $O(k\log k)$ comparisons to sort $2k$ elements. To
compare two line segments, we apply the decision
algorithm. Thus, the running time is $O(k\log k \cdot (k\log^2
n+n\log n))$.  But we can reduce the running time by using the
parallel sorting algorithm described in~\cite{parallel-sorting}.  The
parallel sorting algorithm consists of $O(\log k)$ iterations,
and each iteration consists of $O(k)$ comparisons which are
independent to the others.  In each iteration, we compute the
angles corresponding to the comparisons. We have $O(k)$ angles
and sort them in the increasing order. We apply binary search
using the decision algorithm.  Then, after applying the
decision algorithm $O(\log k)$ times, we can complete the
comparisons in the iteration.  Thus, in total, the algorithm
takes $O(k\log n\log^2 k)$ time.
Moreover, we can reduce the running time further by applying
an extension to Megiddo's technique due to
Cole~\cite{cole-parametric}.  The running time of the
algorithm is $O(k \log n\log k)$.

Since we compute $I$ in $O(n\log n+k\log^2 n)$ time and compute $I'$ in $O(k\log n\log k)$ time,
the overall running time is $O(n\log n+k\log^2 n)$.
\end{proof}

Using Lemma~\ref{lem:first-level} as a subprocedure, we can compute
for a given vertical line $h$ the intersection of $h$ with $\ch(L_\ell)$.
\begin{lemma}
  \label{lem:intersection}
  Given a vertical line $h$, the intersection of $h$ with $\ch(L_\ell)$
  can be computed in $O(n\log^2 n+ k\log^3 n)$ time.
\end{lemma}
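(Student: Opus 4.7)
The plan is to reduce the task to computing a single scalar. Since $L_\ell$ is downward closed along any vertical line (going up along a vertical ray can only add lower envelopes $C_i$ that lie below the point, so the colorful level is non-decreasing), and a standard convex-combination argument shows that downward-closedness along vertical lines is preserved under taking convex hulls, $\ch(L_\ell)\cap h$ is a downward-infinite ray whose only unknown is the $y$-coordinate $y^*$ of its upper endpoint $(x_h,y^*)$. Equivalently, $y^*$ is the threshold at which the predicate ``$(x_h,y)$ lies above $\ch(L_\ell)$'' flips from false to true. It therefore suffices to locate $y^*$; the intersection is then recovered as the downward ray with apex $(x_h,y^*)$.

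For the decision procedure on this predicate I would use Lemma~\ref{lem:first-level}: given a specific $y$, it determines in $O(n\log n+k\log^2 n)$ time whether $(x_h,y)$ lies above $\ch(L_\ell)$, and, when the answer is positive, also returns the two supporting tangent lines to $\ch(L_\ell)$ through $(x_h,y)$. Because the threshold behaviour is monotone in $y$, any comparison of $y^*$ against an explicit value $y_0$ reduces to a single invocation of this oracle at $y=y_0$.

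To locate $y^*$ I would apply Megiddo's parametric search on the parameter $y$ with Lemma~\ref{lem:first-level} as the oracle. Running the decision algorithm on a symbolic $y$ exposes each internal comparison as an inequality of the form $y\lessgtr y_0$ for an explicit threshold $y_0$, and resolving it takes one concrete oracle call. To hold the number of invocations down to $O(\log n)$, I would batch the pending comparisons in the same manner as the proof of Lemma~\ref{lem:first-level}: in each round, sort the $O(n+k)$ critical $y$-values and binary-search among them with the oracle, invoking Cole's refinement of parametric search to shave the extraneous logarithmic factors. This leaves $O(\log n)$ oracle calls in total, for a running time of $O(\log n)\cdot O(n\log n+k\log^2 n)=O(n\log^2 n+k\log^3 n)$, as required.

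The main obstacle is that Lemma~\ref{lem:first-level} is itself realised by parametric search (its inner parameter is the tangent angle $\theta^*$), so the construction is a nested parametric search. Care is needed so that the oracle calls required by the outer search on $y$ do not cascade into the inner search on $\theta$ in a way that multiplies the round counts; the standard remedy is to treat each outer oracle call as a self-contained invocation that freezes the outer state, and to verify that Cole's technique applies uniformly at both levels so the $O(\log n)$ outer-round count is preserved. Once this bookkeeping is in place, the remaining arithmetic is routine.
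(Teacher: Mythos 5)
Your high-level plan is the same as the paper's: treat $\ch(L_\ell)\cap h$ as a downward ray, so only the apex $y^*$ must be found, and locate it by parametric search along $h$ using the procedure of Lemma~\ref{lem:first-level} as the decision oracle. The gap is in the claim that a generic simulation of that procedure at a symbolic $y$, batched ``in the same manner as Lemma~\ref{lem:first-level}'' and accelerated by Cole's technique, needs only $O(\log n)$ oracle calls in total. Cole's refinement is not a black box for arbitrary algorithms; it applies to comparison structures such as sorting networks. The decision algorithm you are simulating is a multi-phase procedure with sequentially dependent stages (per-envelope binary searches, a sort, a walk, and crucially an inner parametric search over the tangent angle $\theta$ whose own binary search over $O(n)$ candidate angles consists of $\Theta(\log n)$ sequentially dependent inner decision runs). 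At a symbolic $y$ every one of these stages generates $y$-dependent comparisons, and stages that are sequentially dependent cannot be batched into one round; the natural count is therefore $\Omega(\log^2 n)$ outer oracle calls, giving $O\bigl((n\log n+k\log^2 n)\log^2 n\bigr)=O(n\log^3 n+k\log^4 n)$, which overshoots the claimed bound. Your remark that one should ``freeze the outer state'' addresses only the harmless part (concrete outer oracle calls), not the real nesting problem, which is simulating the inner search generically in $y$.

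The paper sidesteps exactly this difficulty by not simulating the decision algorithm generically at all. It narrows the interval on $h$ in three explicit steps so that afterwards the combinatorial data the decision procedure depends on is constant over the interval: (1) binary search over the $O(n)$ intersections of $h$ with the lines supporting the envelope edges, using $O(\log n)$ oracle calls (this is the dominant $O(n\log^2 n+k\log^3 n)$ term); (2) fixing which envelope segments the tangent ray $\gamma_x$ crosses, done \emph{not} by testing all $O(n)$ segment endpoints with the oracle (too expensive) but by dualizing the endpoints to lines and locating the dual point of the tangent line in their arrangement via recursive $(1/r)$-cuttings, in $O(n+k\log^2 n)$ time; (3) Cole-sorting only the $O(k)$ crossed segments along $\gamma_{x^*}$ in $O(k\log^2 k\log n)$ time. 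Since the decision procedure depends only on this now-frozen set and order, its outcome is constant on the final interval and $x^*$ is the interval's lowest point. These interval-narrowing constructions, especially step (2), are the substantive content of the proof, and your proposal leaves them (and the justification of the $O(\log n)$ oracle-call count) unsupplied.
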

\begin{proof}
We again apply parametric search on the line $h$ to find the
intersection point $x^*$ of $h$ with $\ch(L_\ell)$.  Initially, we
set an interval $I = h$. We will reduce the 
interval in three steps. In every step, the interval contains $x^*$.

\paragraph{The first step.}
Let $\mathcal{C}$ be the set of all edges (line segments) of
lower envelopes $C_i$ for all $i$.  For each line segment $s$
in $\mathcal{C}$, we denote the line containing $s$ by
$\hat{s}$.  We compute the intersection points between $h$ and
$\hat{s}$ for every line segment $s$, and sort them along $h$.
We apply binary search on the sorted list of the intersection points using the
algorithm in Lemma~\ref{lem:first-level}.  Then we have the
lowest intersection point $x_1$ that lies above $\ch(L_\ell)$
and the highest intersection point $x_2$ that lies below
$\ch(L_\ell)$.  They can be computed in $O(n\log^2 n+ k\log^3
n)$ time.  Note that $x^*$ lies between $x_1$ and $x_2$ along $h$.  We
let $I$ be the interval between $x_1$ and $x_2$.

\paragraph{The second step.}
We reduce the interval $I$ containing $x^*$ such that
for any point $x \in I$, the set of line segments in
$\mathcal{C}$ intersected by $\gamma_x$ remains the same, where $\gamma_x$
is the ray starting from $x$ and tangent to $\ch(L_\ell)$ at
a vertex of $\ch(L_\ell)$ lying left to $x$.  To this end, for every endpoint
$c$ of the line segments in $\mathcal{C}$, we check whether
$c$ lies above $\gamma_{x^*}$ or not.  This can be
done in $O(k\log n)$ time for each endpoint by applying the decision algorithm in the proof of 
Lemma~\ref{lem:first-level}.  Although we
have $O(n)$ endpoints of line segments in $\mathcal{C}$, we do
not need to apply the decision algorithm in the proof of Lemma~\ref{lem:first-level}
for all of them.

For each endpoint $c$ of the line segments in $\mathcal{C}$
lying left to $h$, we consider its dual $c^*$.  Let
$\mathcal{C}'$ be the set of the lines which are dual of the 
endpoints of the line segments in $\mathcal{C}$. 
Let $\gamma^*$ be the point dual to the line containing ray $\gamma_{x^*}$.
For a line $c^*$, we can determine whether $\gamma^*$ lies above $c^*$
or not in $O(k\log n)$ time using the decision algorithm in the proof of 
Lemma~\ref{lem:first-level}.  In the
  arrangement of the lines $c^*$ for all endpoints $c$ of the line segments,
  we will find the cell that contains $\gamma^*$  without constructing the arrangement explicitly. as follows.

For a parameter $r$ with $1\leq r\leq n$, 
a $(1/r)$-cutting of $\mathcal{C}'$ is defined as 
a set of interior-disjoint (possibly unbounded) triangles 
whose union is the plane with the following property: 
no triangle of the cutting is intersected by more than $n/r$ lines in $\mathcal{C}'$.
We compute a $(1/r)$-cutting of size $O(r^2)$ in $O(nr)$ time using the algorithm by Chazelle~\cite{chazelle1993}.
The number of lines in $\mathcal{C}'$ intersecting $\triangle$ is
$|\mathcal{C}'|/r$ for any $\triangle$ in the cutting.  Note that for a line $c^*$ in
$\mathcal{C}'$ which does not intersecting $\triangle$, we can
check whether $\gamma^*$ lies above $c^*$ or not in constant time.
For each $\triangle$ in the cutting, we check whether it contains $\gamma^*$ 
in $O(k\log n)$ time using the decision algorithm in the proof of  Lemma~\ref{lem:first-level}.
Note that there is exactly one triangle in the cutting containing $\gamma^*$.
We apply the $(1/r)$-cutting within the triangle recursively
until we find the cell in the arrangement of
$\mathcal{C}'$ containing $\gamma^*$. This can be done in $O(n+k\log^2 n)$ time.  We have the interval $I$
for $x^*$ with the desired property.

\paragraph{The third step.}
Let $\mathcal{C}_I$ be the set of line segments in
$\mathcal{C}$ intersecting $\gamma_x$ for every $x \in I$. Note that the size of $\mathcal{C}_I$ 
is at most $2k$. Recall
that $\gamma_x$ is the ray starting from $x$ and tangent to
$\ch(L_\ell)$ at some point lying left to $x$. In this step, our goal is to sort the
line segments in $\mathcal{C}_I$ along the ray $\gamma_{x^*}$
without explicitly computing $x^*$.  To sort them, for two
line segments $s_1$ and $s_2$ in $\mathcal{C}_I$, we need to
determine whether $s_1$ comes before $s_2$ along $\gamma_{x^*}$ as
follows.  We compute the intersection point $s$ between $s_1$ and $s_2$
and check whether $\gamma_{x^*}$ lies above $s$ or not using the decision algorithm in the proof of 
Lemma~\ref{lem:first-level}.  Then we can determine the order
for $s_1$ and $s_2$ along $\gamma_{x^*}$ because we have already
reduced the interval in the first step.  (If the
intersection does not exist, we can determine the order directly.)
To sort all line segments efficiently, we again use Cole's
parallel sorting algorithm and a cutting as we did in the second step and in Lemma~\ref{lem:first-level}. 
Then we can sort all line segments in $O(k\log^2 k\log n)$ time.

\paragraph{Computing the intersection point.}
We have the interval $I$ containing $x^*$ such that the order of line segments in $\mathcal{C}$ 
intersecting $\gamma_x$ remains the same for any $x\in I$. 
Notice that the procedure in Lemma~\ref{lem:first-level} depends
only on this order. Thus any point $x\in I$ lies above the convex hull of $L_\ell$.
Therefore, the lowest point in $I$ is $x^*$ by definition.
In total, we can compute $x^*$ in $O(n\log^2 n+ k\log^3 n)$ time.
\end{proof}

\subsubsection{Main Procedure: Computing the Convex hull of \texorpdfstring{$L_\ell$}{L}}
\label{sec:main}
We are given a set of $k$ polygonal curves (lower envelopes) of total
complexity $O(n)$ and an integer $\ell$.  Let $\mathcal{C}$ be the set
of the edges (line segments) of the $k$ polygonal curves.
Recall that $L_\ell$ is the set of points of colorful level at most
$\ell$.  That is, $L_\ell$ is the set of points lying above (or
contained in) at most $\ell$ polygonal curves.  In this subsection, we
give an algorithm to compute the convex hull of $L_\ell$.

Basically, we subdivide the plane into $O(n)$ vertical slabs such that
each vertical slab does not contain any vertex of the
$k$ polygonal curves in its interior.  We say a vertical slab is \emph{elementary} if
its interior contains no vertex of the $k$ polygonal curves.  Let $A$
be an elementary vertical slab of the subdivision and $\mathcal{Q}$ be
the set of the intersection points of the line segments in $\mathcal{C}$ with $A$.

\begin{figure}
  \begin{center}
    \includegraphics[width=0.8\textwidth]{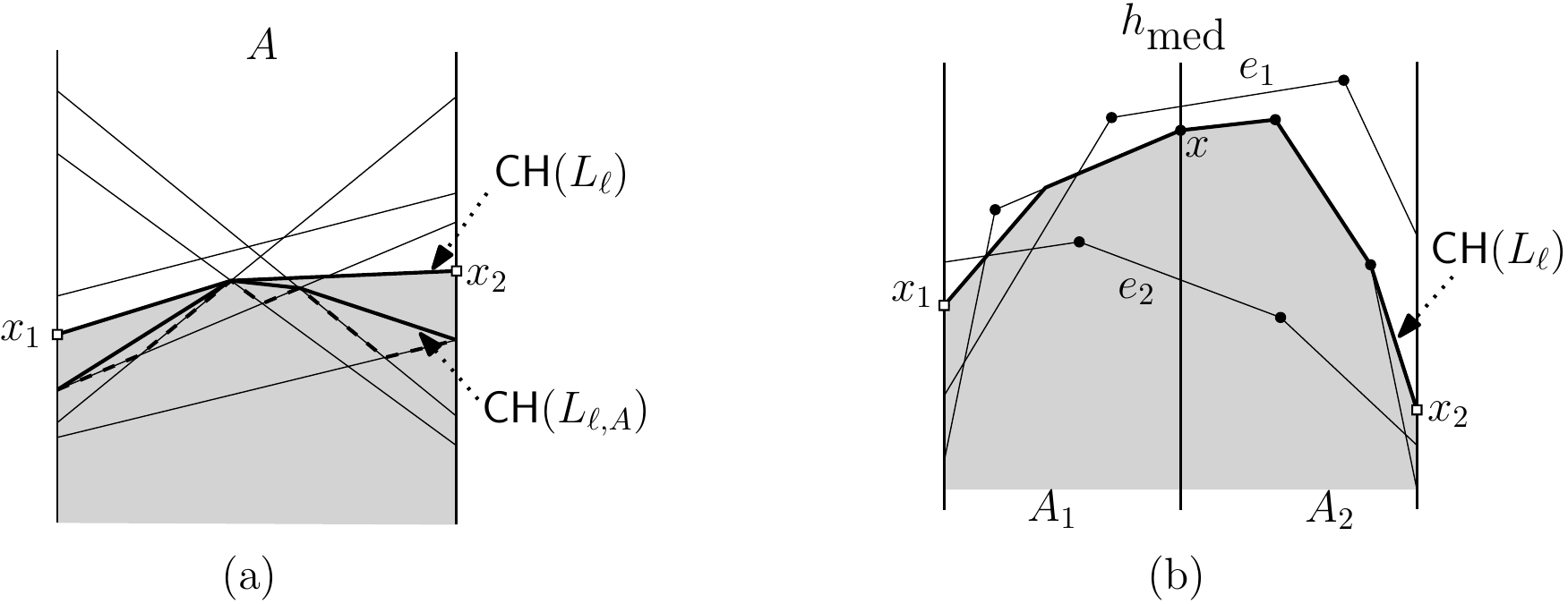}
 		\caption{\small (a) $\ch(L_\ell) \cap A$ coincides with the convex hull of $x_1$, $x_2$, and 
 		$\ch(L_{\ell,A})$. (b) We put $e_1$ only to $\mathcal{Q}_2$ and $e_2$ to both
 		sets.
 	\label{fig:recursion}}
  \end{center}
\end{figure}

Consider the arrangement of the line segments in
$\mathcal{Q}$ restricted to $A$.  See
Figure~\ref{fig:recursion}(a).  Let $\ch(L_{\ell,A})$
denote the convex hull of points of level at most
$\ell$ in this arrangement.  Note that
$\ch(L_{\ell,A})$ is contained in $\ch(L_\ell) \cap
A$, but it does not necessarily coincide with $\ch(L_\ell) \cap
A$.

By the following observation, we can compute
$\ch(L_\ell) \cap A$ once we have $\ch(L_{\ell,A})$.

\begin{observation}
  The intersection of $\ch(L_\ell)$ with $A$ coincides
  with the convex hull of $x_1$, $x_2$, and $\ch(L_{\ell,A})$, where
  $x_1$ and $x_2$ are the intersection points of $\ch(L_\ell)$ with the
  vertical lines bounding $A$.
\end{observation}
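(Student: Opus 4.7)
Write $K_1=\ch(L_\ell)\cap A$ and $K_2=\ch(\{x_1,x_2\}\cup \ch(L_{\ell,A}))$. The inclusion $K_2\subseteq K_1$ is immediate: $x_1,x_2\in K_1$ by definition, and for every point $q\in L_{\ell,A}$ the vertical line through $q$ meets each $C_i$ in a single point, so the level of $q$ in the slab arrangement equals its colorful level; hence $L_{\ell,A}\subseteq L_\ell\cap A$ and $\ch(L_{\ell,A})\subseteq K_1$, and convexity of $K_1$ finishes this direction.

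For the reverse inclusion I would exploit the fact that $L_\ell$ is \emph{downward-closed}: the colorful level can only weakly decrease as a point moves vertically downward, because each curve $C_i$ that lay below the point can only rise above it. This property is preserved by taking convex hulls and by intersecting with the vertical slab $A$, so both $K_1$ and $K_2$ are downward-closed convex subsets of $A$, each determined by its concave upper-boundary function on $[a,b]$. Thus it suffices to show that every point $p$ on the upper boundary of $K_1$ lies in $K_2$.

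Fix such a $p$ and apply Caratheodory's theorem to obtain $p=\sum_{i=1}^{3}\lambda_i q_i$ with $q_i\in L_\ell$. Because $p$ lies on a supporting line of $\ch(L_\ell)$ from above, the three witnesses all lie on that line, and two of them---call them $q_1,q_2$---already express $p$ as a convex combination on the segment $q_1q_2$. I would then distinguish cases according to which side of $A$ each $q_i$ lies on. If both lie in $A$, then $p\in\ch(L_{\ell,A})\subseteq K_2$. Otherwise the segment $q_1q_2$ meets the line $x=a$ (or $x=b$) at a point $z\in\ch(L_\ell)$, which by the definition of $x_1$ (resp.\ $x_2$) as the topmost intersection point must lie on the downward ray from $x_1$ (resp.\ $x_2$); downward-closedness of $K_2$ then places $z\in K_2$, and $p$ lies on a segment whose endpoints are all in $K_2$. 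The one delicate step I anticipate is the reduction from three to two Caratheodory witnesses, which relies on $p$ lying on a supporting line of $\ch(L_\ell)$ from above; the remainder is bookkeeping about downward-closed convex sets in the slab and the extremality of $x_1,x_2$.
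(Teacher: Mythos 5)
The paper offers no proof of this observation at all---it is stated as self-evident---so the only question is whether your argument stands on its own, and in outline it does. The easy inclusion, the reduction to the upper boundary, and the key step (a point of the upper boundary of $\ch(L_\ell)$ at an abscissa interior to $A$ lies, by the supporting-line argument, on a segment joining two points of $L_\ell$, which you then clip at the slab walls, where $\ch(L_\ell)$ meets each wall in the vertical ray below $x_1$, resp.\ $x_2$) are all sound; the Carath\'eodory-plus-supporting-line reduction you single out as delicate is in fact the standard argument and is correct as you describe it.

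Two repairs are needed, one of them load-bearing. First, your justification that $K_2$ is downward closed is wrong as written: $K_2$ is the hull of $\{x_1,x_2\}\cup\ch(L_{\ell,A})$, and this generating set is \emph{not} downward closed, because $x_1$ (resp.\ $x_2$) in general lies strictly above the topmost point $y_1$ (resp.\ $y_2$) of $L_\ell$ on the bounding line whenever the global hull bridges over that wall; so ``convex hulls preserve downward-closedness'' does not apply directly. The fix is short but must be said: since $L_{\ell,A}=L_\ell\cap A$, the set $\ch(L_{\ell,A})$ meets each bounding line of the closed slab in the downward ray with apex $y_i$, and $y_i$ lies at or below $x_i$; hence $K_2$ contains the segment from $x_i$ to $y_i$ and therefore the entire downward ray below $x_i$, so $K_2$ is the convex hull of a union of three downward-closed sets, which is downward closed (translate all witnesses of a convex combination down by the same amount). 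You invoke downward-closedness of $K_2$ twice---in the reduction to the upper boundary of $K_1$ and when placing the clip point $z$ in $K_2$---so this cannot be waved through. Second, you record only $L_{\ell,A}\subseteq L_\ell\cap A$, but your case ``both witnesses lie in $A$'' needs the reverse inclusion $L_\ell\cap A\subseteq L_{\ell,A}$; your own level argument (in an elementary slab each $C_i$ contributes a single segment spanning the slab, so the slab level of a point of $A$ equals its colorful level) gives equality of the two sets, so simply state it in that form.
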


A subdivision of $\mathbb{R}^2$ into elementary vertical slabs with respect
to the $k$ polygonal curves
can be easily computed by taking all vertical lines passing through endpoints of the line
segments in $\mathcal{C}$.  However, the total complexity of
$\mathcal{Q}$ over all slabs $A$ is $\Omega(n^2)$.
Instead, we will choose a subset $\mathcal{Q}'$
of $\mathcal{Q}$ and a value $\ell'$ such that $L_\ell \cap A$
coincides with the level $\ell'$ with respect to
$\mathcal{Q}'$, and the total complexity of $\mathcal{Q}'$ is
linear.  In the following, we show how to choose $\mathcal{Q}'$ for
every elementary slab. 

\paragraph{Subdividing the region into two vertical slabs.}
Initially, the subdivision of $\mathbb{R}^2$ is the plane itself.  We
subdivide each vertical slab into two vertical subslabs recursively until
every slab becomes elementary. While we subdivide a slab $A$,
we choose a set $\mathcal{Q}'$ for $A$.

Consider a vertical slab $A$ and assume that we already have
$\mathcal{Q}'$ and a level $\ell'$ for $A$.  Assume further that
we already have the intersection points $x_1$ and $x_2$ of
$\ch(L_\ell)$ with the vertical lines bounding $A$. See Figure~\ref{fig:recursion}(a). 
We find the vertical line $h_{\text{med}}$ passing through the
median of the endpoints of line segments in $\mathcal{Q}'$ with respect
to their $x$-coordinates in $O(|\mathcal{Q}'|)$ time, where $|\mathcal{Q}'|$ is the cardinality of $Q'$.  The vertical line subdivides
$A$ into two vertical subslabs.  Let $A_1$ be the subslab lying left
to the vertical line, and $A_2$ be the other subslab.  We compute
$\ch(L_{\ell',A}) \cap h_\textnormal{med}$ by applying the algorithm
in Lemma~\ref{lem:intersection} in $O(|\mathcal{Q}'|\log^2|\mathcal{Q}'|+k'\log^3 |\mathcal{Q}'|)$ time, where
$k'$ is the number of distinct colors assigned to the line segments in $\mathcal{Q}'$. Since $k'\leq |\mathcal{Q}'|$, this running time is $O(|\mathcal{Q}'|\log^3|\mathcal{Q}'|)$.
We denote the intersection point by $x$.  While computing $x$, we can
compute the slope $\tau$ of the edge of $\ch(L_\ell)$ containing $x$.
(If $x$ is the vertex of the convex hull, we compute the slope
of the edge lying left to $x$.)  See
Figure~\ref{fig:recursion}(b).

We show how to compute two sets $\mathcal{Q}_1,
\mathcal{Q}_2$ and two integers $\ell_1, \ell_2$ such that
$\ch(L_{\ell',A})$ is the convex hull of $x$, $\ch(L_{\ell_1,A_1})$ and
$\ch(L_{\ell_2,A_2})$, where $\ch(L_{\ell_t,A_t})$ is the convex
hull of the level $\ell_t$ in the arrangement of $\mathcal{Q}_t$
for $t=1,2$.  The two sets are initially set to be empty, and two
integers are set to be $\ell'$.  Then we consider each line segment
$s$ in $\mathcal{Q}'$.  If $s$ is fully contained in one subslab
$A_t$, then we put $s$ only to $\mathcal{Q}_t$.
Otherwise, $s$ intersects $h_\textnormal{med}$.  If $s$ lies above
$x$, then we compare the slope of $s$ and $\tau$.  Without loss of
generality, we assume that $\tau \geq 0$.  If the slope of $s$ is
larger than $\tau$, $\ch(L_{\ell',A}) \cap A_2$ does not intersect
$s$. Thus, a point of level at most $\ell'$ in the arrangement of
$\mathcal{Q}'$ restricted to $A_2$ has level at most $\ell'$ in the
arrangement of $\mathcal{Q}' \setminus \{s\}$ restricted to $A_2$.
This means that we do not need to put $s$ to $\mathcal{Q}_2$.  We
put $s$ only to $\mathcal{Q}_1$.  The case that the slope of $s$
is at most $\tau$ can be handled analogously.

Now, consider the case that $s$ lies below $x$.  If both endpoints are
contained in the interior of $A$, we put $s$ to both
$\mathcal{Q}_1$ and $\mathcal{Q}_2$.  Otherwise, $s$ crosses
one subslab, say $A_1$.  In this case, we put $s$ to
$\mathcal{Q}_2$.  For $A_1$, we check whether $s$ lies below the
line segment connecting $x_1$ and $x$.  If so, we set $\ell'_1$ to
$\ell'_1-1$ and do not put $s$ to the set for $A_1$.  This is because
$\ch(L_{\ell',A})$ contains $s$.  Otherwise, we put $s$ to the set for
$A_1$.

We analyze the running time of the procedure.  In the $i$th iteration,
each vertical slab in the subdivision contains at most $n/2^i$
endpoints of the line segments in $\mathcal{C}$.  Thus, we can
complete the subdivision in $O(\log n)$ iterations.

Each iteration takes $O(\sum_{j} n_j\log^3 n_j)$ time, where $n_j$ is
  the complexity of $\mathcal{Q}'_{A_j}$ for the $j$th leftmost slab $A_j$ in the iteration.
By construction, each line segment in $\mathcal{C}$ is contained in at
most two sets defined for two vertical slabs in the same iteration.  Therefore, each
iteration takes $O(\sum_{j} n_j\log^3 n_j)=O(n\log^3 n)$ time.

\paragraph{Computing the convex hull inside an elementary vertical slab.}
\label{sec:elementary}
We have $O(n)$ elementary vertical slabs.
Each elementary vertical slab has a set of line segments, and the total
number of line segments in all vertical slabs is $O(n)$.
For each elementary vertical slab with integer $\ell'$,
we have to compute the convex hull of the level $\ell'$ in the
arrangement of its line segments.

Matous\v{e}k~\cite{center-region-2d} gave an $O(n\log^4 n)$-time
algorithm to compute the convex hull of the level $\ell$ in the
arrangement of lines.  In our problem, we want to compute the convex
hull of the level $\ell$ in the arrangement of lines restricted to a
vertical slab.  The algorithm in~\cite{center-region-2d} works also
for our problem (with modification).  Since this modification is
straightforward, we omit the details of this procedure.
\begin{lemma}
  The convex hull of $L_\ell$ can be computed in $O(n \log^4 n)$ time.
\end{lemma}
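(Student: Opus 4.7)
The plan is to combine the two phases described above: the recursive slab subdivision, and the processing of each elementary slab via a modified Matoušek algorithm. First I would argue, by induction on the depth of the recursion, that after subdividing a slab $A$ with associated set $\mathcal{Q}'$ and level $\ell'$, the pair $(\mathcal{Q}_t,\ell_t)$ produced for the subslab $A_t$ satisfies the invariant that $\ch(L_{\ell',A}) \cap A_t$ is determined by the level $\ell_t$ in the arrangement of $\mathcal{Q}_t$ restricted to $A_t$ (together with the already-known tangent slope information at the median line). This follows from the case analysis used to distribute each segment $s\in\mathcal{Q}'$: a segment fully above $x$ with extreme slope may safely be dropped from one side since it cannot contribute to $\ch(L_{\ell',A})$ on that side; a segment crossing one subslab entirely below the chord $\overline{x_1x}$ can be absorbed by a decrement of $\ell'$ because it contributes one additional unit of level uniformly over that subslab.

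Next I would bound the cost of the subdivision phase. After $i$ iterations every slab contains at most $n/2^i$ endpoints of segments from $\mathcal{C}$, so after $O(\log n)$ iterations every slab is elementary. In each iteration the work per slab is dominated by one call to Lemma~\ref{lem:intersection}, which runs in $O(|\mathcal{Q}'|\log^3|\mathcal{Q}'|)$ time since the number of distinct colors is bounded by $|\mathcal{Q}'|$. The critical combinatorial observation is that each segment of $\mathcal{C}$ is inserted into at most two of the sets $\mathcal{Q}_t$ during a single iteration (in the ``below $x$ and crossing both subslabs'' case), so $\sum_j |\mathcal{Q}'_{A_j}| = O(n)$ per iteration. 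Hence each iteration costs $O(n \log^3 n)$ and the entire subdivision phase costs $O(n\log^4 n)$.

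For the elementary-slab phase I would invoke the modification of Matoušek's algorithm~\cite{center-region-2d} alluded to in Section~\ref{sec:elementary}: it computes the convex hull of the level $\ell'$ in an arrangement of $m$ line segments within a fixed vertical slab in $O(m\log^4 m)$ time. Applying it independently to each elementary slab and summing, we have $\sum_A m_A = O(n)$ by the same doubling argument as above, so the total cost across all elementary slabs is $O(n\log^4 n)$ as well. Finally, by the observation preceding the subdivision phase, $\ch(L_\ell) \cap A$ is the convex hull of $\ch(L_{\ell,A})$ with the two precomputed endpoints $x_1,x_2$, and gluing these $O(n)$ pieces along the shared vertical boundaries (each carrying a known tangent direction) produces $\ch(L_\ell)$ in additional linear time.

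The main obstacle is verifying the invariant maintained through the subdivision: one must check that the slope-based pruning of segments lying above $x$ and the level-decrement for segments lying below the chord genuinely preserve $\ch(L_{\ell',A})$ restricted to each subslab, rather than only preserving the level set $L_{\ell',A}$. This is what justifies replacing the quadratic-complexity $\mathcal{Q}$ by the linear-total-complexity $\mathcal{Q}'$ without distorting the convex hull, and is the reason the entire procedure achieves $O(n\log^4 n)$ instead of the $\Omega(n^2)$ cost one would incur by taking all edges of $\mathcal{C}$ inside every elementary slab.
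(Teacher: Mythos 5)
Your proposal is correct and follows essentially the same route as the paper: the median-based recursive slab subdivision maintaining the pruned sets $\mathcal{Q}'$ and adjusted levels $\ell'$ with total linear complexity per iteration, one call to Lemma~\ref{lem:intersection} per split costing $O(|\mathcal{Q}'|\log^3|\mathcal{Q}'|)$ over $O(\log n)$ iterations, the modified algorithm of~\cite{center-region-2d} within each elementary slab, and the gluing via the observation that $\ch(L_\ell)\cap A$ is the hull of $x_1$, $x_2$, and $\ch(L_{\ell,A})$. The invariant you flag as the main obstacle (that the slope-based pruning and level decrements preserve the hull restricted to each subslab, not merely the level set) is exactly the correctness argument the paper relies on, so nothing essential is missing relative to the paper's own exposition.
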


\begin{theorem}
  Given a set $S$ of $n$ colored points in $\mathbb{R}^2$ and an
  integer $\ell$, the set of points of colorful depth at most $\ell$
  with respect to $S$ can be computed in $O(n \log^4 n)$ time.
\end{theorem}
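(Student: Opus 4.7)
The plan is to combine the immediately preceding lemma on $\ch(L_\ell)$ with an analogous computation of $\ch(U_{k-\ell})$, and then perform a linear-time primal reconstruction.

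First, by the preceding lemma I obtain $\ch(L_\ell)$ in $O(n\log^4 n)$ time. Second, I would compute $\ch(U_{k-\ell})$ in the same $O(n\log^4 n)$ bound by invoking a symmetric version of the same algorithm. Concretely, reflecting the dual plane across a horizontal line sends each lower envelope $C_i$ to the corresponding upper envelope $\overline{C}_i$ and converts $U_{k-\ell}$ into an ``$L$-type'' region for the reflected input: a dual point has colorful level at least $k-\ell$ with respect to the $C_i$'s iff its reflected image has colorful level at most $\ell$ with respect to the $\overline{C}_i$'s, which now play the role of lower envelopes. Hence the machinery of Section~\ref{sec:main} applies verbatim to the reflected instance.

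Third, from $\ch(L_\ell)$ and $\ch(U_{k-\ell})$ I would recover the answer in $O(n)$ time. By the characterization stated right before Section~\ref{sec:color-subprocedure}, a point $p\in\mathbb{R}^2$ has colorful depth at least $\ell$ iff its dual line $p^*$ is disjoint from both convex hulls. The set of primal points whose dual line misses a convex polygon $C$ of complexity $O(n)$ is the intersection of $O(n)$ halfplanes, one per edge of $C$, which I can assemble by a single walk along $\partial C$. Intersecting the two resulting convex regions takes another $O(n)$ time, and the ``at most $\ell$'' formulation appearing in the theorem follows by complementation in linear time.

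The main obstacle is the second step: the algorithm of Section~\ref{sec:main} is phrased in terms of the specific structure of $L_\ell$, so I must check that everything transfers cleanly to $U_{k-\ell}$. The essential structural input used by Lemma~\ref{lem:intersection} is that $\ch(L_\ell)\cap g$ is a downward ray for every vertical line $g$; after reflection, $\ch(U_{k-\ell})\cap g$ is an upward ray, and the parametric search, the $(1/r)$-cuttings, and Cole's parallel-sorting step all carry over once ``above'' and ``below'' are exchanged throughout. Modulo this bookkeeping, summing the three steps yields the claimed $O(n\log^4 n)$ bound.
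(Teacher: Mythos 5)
Your proposal is correct and follows essentially the same route as the paper, which also obtains the theorem by computing $\ch(L_\ell)$ via the preceding lemma, computing $\ch(U_{k-\ell})$ ``analogously,'' and then recovering the primal region from the two hulls in linear time. Your reflection argument for $U_{k-\ell}$ and the dual halfplane-intersection reconstruction merely spell out the symmetry and the linear-time step that the paper leaves implicit.
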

\begin{corollary}
  Given a set $S$ of $n$ colored points in $\mathbb{R}^2$, the
  colorful center region of $S$ can be computed in $O(n \log^4 n)$
  time.
\end{corollary}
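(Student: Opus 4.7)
The plan is to derive this corollary as an immediate consequence of the preceding theorem by making the right choice of threshold. By the definition given at the start of Section~\ref{sec:second-problem}, the colorful center region in $\mathbb{R}^2$ is precisely the set of points whose colorful depth is at least $\lceil k/(d+1)\rceil=\lceil k/3\rceil$, so I would set $\ell:=\lceil k/3\rceil$ and hand this value to the algorithm supplied by the preceding theorem.

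First I would check well-definedness: since we assume $1\le k\le n$ and at least one point of each color is present in $S$, the integer $\ell=\lceil k/3\rceil$ is a valid input (nonnegative and at most $n$). Lemma~\ref{lem:lower_bound} then guarantees that the resulting region is nonempty, so the output is meaningful and not a degenerate answer masking a missing case.

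Next I would simply invoke the preceding theorem with this $\ell$. It returns, in $O(n\log^4 n)$ time, the set of points of $\mathbb{R}^2$ whose colorful depth with respect to $S$ is at least $\ell$; by the definition above, this set is exactly the colorful center region of $S$. The claimed time bound is therefore inherited verbatim.

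The only point that deserves a sentence of justification is that the running time does not hide a $k$-dependent blow-up when $\ell$ is chosen as a function of $k$. Inspecting the subprocedures used to prove the theorem (Lemmas~\ref{lem:first-level} and~\ref{lem:intersection} and the main procedure of Section~\ref{sec:main}), every occurrence of $k$ is bounded by $n$ because $k\le n$, and the dependence on $\ell$ is only through the level of an arrangement, for which no extra cost is paid. Hence substituting $\ell=\lceil k/3\rceil$ preserves the $O(n\log^4 n)$ bound, and the corollary follows. I do not expect a genuine obstacle here; the work has already been done in the theorem, and this step is essentially a specialization of the parameter.
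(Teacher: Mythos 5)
Your proposal is correct and matches the paper's (implicit) argument: the corollary is obtained exactly by specializing the preceding theorem to $\ell=\lceil k/(d+1)\rceil=\lceil k/3\rceil$, as the paper already notes when introducing the algorithm. Your extra remarks on well-definedness, nonemptiness via Lemma~\ref{lem:lower_bound}, and the fact that $k\le n$ causes no blow-up are harmless additions to the same route.
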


\section{Computing the Colorful Center Region in \texorpdfstring{$\mathbb{R}^3$}{R3}}
By combining ideas presented in the previous sections, we can compute
the colorful level $\ell$ of a set of $n$ colored points in $\mathbb{R}^3$ in $O(n^2\log^4 n)$ time
for any integer $\ell$.
We use a way to subdivide the planes described in our first algorithm together with the algorithm in Section~\ref{sec:color-subprocedure} with a modification.

We map each point $p$ in $S$ to a plane $p^*$ in $\mathbb{R}^3$ using the standard duality transform,
and denote the set of all planes dual to the points in $S$ by $S^*$.
Due to the standard duality, our problem reduces to computing the convex hull of $\Lambda_\ell$, where
$\Lambda_\ell$ is the colorful level $\ell$ of the arrangement of $S^*$. 
We sort the planes as described in Section~\ref{sec:first-problem} and denote the sequence by
$\{h_1,\ldots,h_n\}$.
Let $H_j=\{h_i \cap h_j \mid 1 \leq i<j\}$.

We consider the planes in $S^*$ one by one in this order
and compute $K_j$ for each plane $h_j \in S^*$, where 
$\Gamma_j = \{K_i \cap h_j \mid 1\leq i < j\}$ and $K_j = \ch(\ch(\Gamma_j) \cup (\Lambda_\ell \cap h_j))$. 
To do this, we use a cutting of $H_j$ as described in Section~\ref{sec:first-problem}.
The difference is that we replace the procedure in Lemma~\ref{lem:procedure} with
the procedure described in \ref{sec:color-subprocedure} with a modification.

Finally, we obtain $K_j$ for each plane $h_j\in S^*$. By the construction of $K_j$, the convex hull
of $\Lambda_\ell$ is
the convex hull of the polygons $K_j$'s over all planes $h_j\in S^*$. Thus, we can obtain
the colorful level $\ell$ in $O(n^2\log^4 n)$ time in total and the colorful center region in the same time.

\begin{theorem}
	Given a set $S$ of $n$ colored points in $\mathbb{R}^3$ and an
	integer $\ell$, the set of points in $\mathbb{R}^3$ of colorful depth at most $\ell$
	with respect to $S$ can be computed in $O(n^2 \log^4 n)$ time.
\end{theorem}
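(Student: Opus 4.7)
The plan is to graft the 2D colorful subroutine of Section~\ref{sec:color-subprocedure} onto the 3D refinement framework of Section~\ref{sec:first-problem}. The standard duality maps $S$ to a set $S^*$ of $n$ colored planes in $\mathbb{R}^3$ and reduces the problem to computing the convex hull of the colorful level $\Lambda_\ell$ in the arrangement of $S^*$. I would sort the planes into $\langle h_1,\ldots,h_n\rangle$ using the same sweep order as in Section~\ref{sec:first-problem}, process them one by one, and for each $h_j$ maintain the polygon $K_j=\ch(\ch(\Gamma_j)\cup(\Lambda_\ell\cap h_j))$ with $\Gamma_j=\{K_i\cap h_j\mid 1\le i<j\}$. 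Since the convex hull of the $K_j$'s is still $\ch(\Lambda_\ell)$, the outer loop and its bookkeeping are inherited verbatim.

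Next I would reformulate the in-plane subproblem in $h_j$ in the colorful 2D language of Section~\ref{sec:color-subprocedure}. The induced lines $H_j=\{h_i\cap h_j:i<j\}$ inherit colors from $S^*$; for each color $c$ I form the lower envelope $C_c^{(j)}$ of the color-$c$ lines in $h_j$. By construction, the colorful level of a point $x\in h_j$ with respect to $H_j$ equals the number of envelopes $C_c^{(j)}$ lying on or below $x$, which matches the setup of Section~\ref{sec:color-subprocedure} exactly. With this reformulation, the triangle-refinement scheme from Section~\ref{sec:first-problem} applies directly: starting from $\triangle=h_j$, each iteration computes a $1/r$-net of the lines crossing $\triangle$, triangulates its arrangement, invokes $\intersection$ on every side to recover the clockwise list $K$ of boundary edges, and then refines by the $E'$ construction around pairs of consecutive edges in $K$. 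Lemma~\ref{lem:triangle} carries over to the colorful setting once the offset ``number of lines below $\triangle'$'' is replaced by the ``number of colors at least one of whose lines lies entirely below $\triangle'$'', and Lemma~\ref{lem:four} together with Corollary~\ref{corollary:linear} transfer verbatim since they are purely geometric.

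The crucial new ingredient is to replace the procedure of Lemma~\ref{lem:procedure} by a colorful analogue built on Lemma~\ref{lem:intersection}. Given a segment $e\subset\triangle$ and the restricted system $(G,u)$, the boundary of $K_j$ consists of arcs of $\ch(\Gamma_j)$ and arcs of the boundary of the colorful $\ell$-sublevel set of $\{C_c^{(G)}\}$ restricted to $\triangle$. Locating its crossing with $e$ can be done by parametric search, using Lemma~\ref{lem:intersection} as the decision oracle that supports a line/hull query on the colorful sublevel, together with a standard tangent query on $\ch(\Gamma_j)$. Calibrated on $m$ lines with at most $m$ colors, this returns an answer in $O(m\log^3(m+|\ch(\Gamma_j)|))$ time, matching Lemma~\ref{lem:procedure}. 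The main obstacle will be precisely this composition: one has to verify that the parametric search over a nonconvex colorful sublevel combines correctly with the convex polygon $\ch(\Gamma_j)$ while also respecting the per-color offsets for envelopes lying entirely below $\triangle$. Once this is verified, the iteration analysis of Section~\ref{sec:first-problem} gives $O(n\log^4 n)$ work per plane $h_j$ through $O(\log_r n)$ iterations of $O(n\log^3 n)$ work each, and summing over $j=1,\ldots,n$ yields the claimed $O(n^2\log^4 n)$ bound.
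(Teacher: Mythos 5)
Your proposal follows essentially the same route as the paper: dualize, sort the planes as in Section~\ref{sec:first-problem}, maintain the polygons $K_j$ via the triangle-refinement scheme, and swap the $\intersection$ procedure of Lemma~\ref{lem:procedure} for the colorful line-versus-hull subroutine of Section~\ref{sec:color-subprocedure} built on per-color lower envelopes, which is exactly what the paper does (and the paper is, if anything, even terser about the "modification" you flag as the crucial composition step). Your additional remark on adjusting the level offset in Lemma~\ref{lem:triangle} to count colors rather than lines is consistent with the paper's intent, so no substantive deviation.
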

\begin{corollary}
	Given a set $S$ of $n$ colored points in $\mathbb{R}^3$, the
	colorful center region of $S$ can be computed in $O(n^2 \log^4 n)$
	time.
\end{corollary}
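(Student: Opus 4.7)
The plan is to derive this corollary as an immediate consequence of the preceding theorem together with the $\mathbb{R}^3$ version of Lemma~\ref{lem:lower_bound}. By definition, the colorful center region in $\mathbb{R}^3$ is the set of points whose colorful Tukey depth is at least $\lceil k/(d+1)\rceil = \lceil k/4\rceil$, so the task reduces to computing this specific level set. The three-dimensional analogue of Lemma~\ref{lem:lower_bound} follows by the same Helly-theorem argument used in the proof of the lemma: pick any $k$ points of distinct colors in $S$, take their Tukey median $c$, and apply Helly's theorem in $\mathbb{R}^3$ to conclude that the ordinary depth of $c$ with respect to these $k$ points is at least $\lceil k/4\rceil$, which lower-bounds its colorful depth with respect to $S$. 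In particular the center region is nonempty and well-defined.

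Having fixed the correct threshold, I would invoke the preceding theorem with $\ell = \lceil k/4\rceil$, which produces the required set in $O(n^2 \log^4 n)$ time. Reading the theorem's statement literally as computing the set of points of colorful depth at most $\ell$, one instead invokes it with $\ell = \lceil k/4\rceil - 1$; the center region is then the complement in $\mathbb{R}^3$ of the output set, and since that output has combinatorial complexity $O(n^2)$ the complement can be extracted in $O(n^2)$ additional time, which is absorbed into the $O(n^2 \log^4 n)$ bound. The parallel with the two-dimensional setting in Section~\ref{sec:second-problem}, where the analogous corollary in $\mathbb{R}^2$ is obtained from its theorem in exactly this way, confirms that no additional machinery is required.

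There is no significant obstacle; the corollary is essentially a definitional reduction to the theorem. The only point requiring care is that the threshold for the colorful center region is $\lceil k/(d+1)\rceil$, depending on the number of colors $k$ rather than the number of points $n$, which is in fact the whole motivation for the colorful variant. This difference does not affect the running time of the underlying algorithm, which is insensitive to the specific value of $\ell$, so the $O(n^2 \log^4 n)$ bound carries over verbatim.
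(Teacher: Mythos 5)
Your proposal is correct and matches the paper's (implicit) argument: the corollary is obtained by invoking the preceding theorem with $\ell=\lceil k/4\rceil$, with nonemptiness already guaranteed by Lemma~\ref{lem:lower_bound}, which is stated and proved for general $d$ and so needs no separate three-dimensional rederivation. The only extra care you take, reading the theorem's ``at most $\ell$'' literally and complementing, is a reasonable way to handle what is evidently a typo for ``at least $\ell$,'' and it does not change the approach or the bound.
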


\section{Conclusion}
In the first part of this paper, we presented an $O(n^2\log^4 n)$-time
algorithm for computing the center region in $\mathbb{R}^3$. This
algorithm is almost optimal since the combinatorial complexity of the
center region is $\Theta(n^2)$ in the worst case. Moreover, our
algorithm improves the previously best known algorithm which takes
$O(n^{2+\epsilon})$ time for any constant $\epsilon>0$.  In the second
part of this paper, we presented an $O(n\log^4 n)$-time algorithm for
computing the colorful center region in $\mathbb{R}^2$.
Both results were achieved by using the duality and the arrangement of lines
for the standard center region and the arrangement of convex polygonal chains
for the colorful center region.

As we mentioned in the introduction, it is not known whether
the center region for points in $d$-dimensional space can be computed efficiently
for $d>3$, except for an $O(n^d)$-time trivial algorithm 
which compute the arrangement of the hyperplanes dual to the points,
while a (colorful) Tukey median can be computed in $O(n\log n+n^{d-1})$
expected time. Since very large data of high dimensionality are common nowadays,
it is required to devise algorithms that compute center region for high dimensional
data efficiently.

{\bibliographystyle{abbrvnat} \bibliography{paper} }
\end{document}